\documentclass[11pt]{article}

\usepackage[a4paper,margin=1in]{geometry}
\usepackage[T1]{fontenc}
\usepackage{amsmath,amssymb,amsthm,mathtools}
\usepackage{bm}
\usepackage{booktabs}
\usepackage{array}
\usepackage{xcolor}
\usepackage{pgfplots}
\pgfplotsset{compat=1.18}
\usepackage[numbers,sort&compress]{natbib}
\usepackage[colorlinks=true,linkcolor=blue!55!black,citecolor=blue!55!black,urlcolor=blue!55!black]{hyperref}
\usepackage[capitalise,noabbrev]{cleveref}

\theoremstyle{plain}
\newtheorem{theorem}{Theorem}
\newtheorem{proposition}{Proposition}
\newtheorem{lemma}{Lemma}
\newtheorem{corollary}{Corollary}
\theoremstyle{definition}
\newtheorem{assumption}{Assumption}

\theoremstyle{remark}
\newtheorem{remark}{Remark}

\crefname{assumption}{Assumption}{Assumptions}
\Crefname{assumption}{Assumption}{Assumptions}

\newcommand{\R}{\mathbb{R}}
\newcommand{\E}{\mathbb{E}}
\newcommand{\Prob}{\mathbb{P}}
\newcommand{\F}{\mathcal{F}}
\newcommand{\Adm}{\mathcal{A}}
\newcommand{\Lop}{\mathcal{L}}
\newcommand{\fmax}{f_{\max}}
\newcommand{\dd}{\,\mathrm{d}}

\newcommand{\dlog}{\mathrm{d}\!\log}
\DeclareMathOperator*{\argmax}{arg\,max}

\definecolor{navy}{HTML}{1F3B6F}
\definecolor{teal}{HTML}{02818A}
\definecolor{midblue}{HTML}{3690C0}
\definecolor{lightblue}{HTML}{A6CEE3}
\definecolor{salmon}{HTML}{FB8072}
\definecolor{greyln}{HTML}{BDBDBD}

\def\DATAfeecurve{(14.14,25.2) (20,25.54) (24.49,25.97) (28.28,26.45) (31.62,26.97) (34.64,27.53) (37.42,28.11) (40,28.71) (42.43,29.33) (44.72,29.96) (46.9,30.61) (48.99,31.27) (50.99,31.94) (52.92,32.63) (54.77,33.32) (56.57,34.03) (58.31,34.75) (60,35.48) (61.64,36.22) (63.25,36.98) (64.81,37.75) (66.33,38.53) (67.82,39.33) (69.28,40.14) (70.71,40.98) (72.11,41.83) (73.48,42.7) (74.83,43.59) (76.16,44.51) (77.46,45.45) (78.74,46.42) (80,47.43) (81.24,48.46) (82.46,49.54) (83.67,50.66) (84.85,51.83) (86.02,53.06) (87.18,54.35) (88.32,55.71) (89.44,57.17) (90.55,58.73) (91.65,60.43) (92.74,62.3) (93.81,64.4) (94.87,66.83) (95.92,69.8) (96.95,73.87) (97.98,100) (98.99,100) (100,100) (101,100) (102,100) (103,100) (103.9,100) (104.9,100) (105.8,100) (106.8,100) (107.7,100) (108.6,100) (109.5,100)}
\def\DATAlaffer{25}
\def\DATAobjA{(1,-0.5607) (3.106,-0.1139) (5.213,0.1129) (7.319,0.2663) (9.426,0.3801) (11.53,0.467) (13.64,0.5338) (15.74,0.5847) (17.85,0.6225) (19.96,0.6494) (22.06,0.6673) (24.17,0.6776) (26.28,0.6815) (28.38,0.68) (30.49,0.6741) (32.6,0.6645) (34.7,0.6519) (36.81,0.6369) (38.91,0.6198) (41.02,0.6013) (43.13,0.5816) (45.23,0.561) (47.34,0.5398) (49.45,0.5183) (51.55,0.4967) (53.66,0.4751) (55.77,0.4536) (57.87,0.4324) (59.98,0.4117) (62.09,0.3913) (64.19,0.3715) (66.3,0.3522) (68.4,0.3336) (70.51,0.3156) (72.62,0.2983) (74.72,0.2816) (76.83,0.2656) (78.94,0.2503) (81.04,0.2356) (83.15,0.2216) (85.26,0.2083) (87.36,0.1956) (89.47,0.1836) (91.57,0.1721) (93.68,0.1612) (95.79,0.151) (97.89,0.1412) (100,0.132)}
\def\DATAobjAopt{(26.71,0.6816)}
\def\DATAobjB{(1,-2.065) (3.106,-1.069) (5.213,-0.583) (7.319,-0.2803) (9.426,-0.06985) (11.53,0.08478) (13.64,0.2016) (15.74,0.2909) (17.85,0.3592) (19.96,0.4109) (22.06,0.4492) (24.17,0.4768) (26.28,0.4954) (28.38,0.5067) (30.49,0.5119) (32.6,0.512) (34.7,0.5081) (36.81,0.5007) (38.91,0.4906) (41.02,0.4783) (43.13,0.4643) (45.23,0.4489) (47.34,0.4325) (49.45,0.4154) (51.55,0.3977) (53.66,0.3798) (55.77,0.3618) (57.87,0.3439) (59.98,0.3261) (62.09,0.3085) (64.19,0.2913) (66.3,0.2745) (68.4,0.2582) (70.51,0.2424) (72.62,0.2271) (74.72,0.2123) (76.83,0.1982) (78.94,0.1846) (81.04,0.1716) (83.15,0.1592) (85.26,0.1474) (87.36,0.1361) (89.47,0.1254) (91.57,0.1153) (93.68,0.1057) (95.79,0.09657) (97.89,0.08797) (100,0.07985)}
\def\DATAobjBopt{(31.6,0.5125)}
\def\DATAobjC{(1,-4.537) (3.106,-2.816) (5.213,-1.923) (7.319,-1.366) (9.426,-0.9811) (11.53,-0.7002) (13.64,-0.4878) (15.74,-0.3236) (17.85,-0.1951) (19.96,-0.09387) (22.06,-0.01408) (24.17,0.04859) (26.28,0.0974) (28.38,0.1349) (30.49,0.163) (32.6,0.1835) (34.7,0.1976) (36.81,0.2065) (38.91,0.211) (41.02,0.2119) (43.13,0.2098) (45.23,0.2054) (47.34,0.1991) (49.45,0.1912) (51.55,0.1822) (53.66,0.1722) (55.77,0.1616) (57.87,0.1505) (59.98,0.1391) (62.09,0.1275) (64.19,0.1159) (66.3,0.1044) (68.4,0.09308) (70.51,0.08195) (72.62,0.07109) (74.72,0.06056) (76.83,0.05038) (78.94,0.04058) (81.04,0.03117) (83.15,0.02218) (85.26,0.0136) (87.36,0.005446) (89.47,-0.002291) (91.57,-0.009612) (93.68,-0.01652) (95.79,-0.02303) (97.89,-0.02915) (100,-0.03489)}
\def\DATAobjCopt{(40.56,0.2119)}
\def\DATAobjD{(1,-7.992) (3.106,-5.432) (5.213,-4.01) (7.319,-3.097) (9.426,-2.459) (11.53,-1.989) (13.64,-1.63) (15.74,-1.349) (17.85,-1.125) (19.96,-0.9452) (22.06,-0.7987) (24.17,-0.6789) (26.28,-0.5808) (28.38,-0.5003) (30.49,-0.4342) (32.6,-0.38) (34.7,-0.3359) (36.81,-0.3) (38.91,-0.2711) (41.02,-0.248) (43.13,-0.2298) (45.23,-0.2157) (47.34,-0.205) (49.45,-0.1972) (51.55,-0.1917) (53.66,-0.1882) (55.77,-0.1863) (57.87,-0.1857) (59.98,-0.1861) (62.09,-0.1874) (64.19,-0.1894) (66.3,-0.1918) (68.4,-0.1946) (70.51,-0.1977) (72.62,-0.201) (74.72,-0.2043) (76.83,-0.2076) (78.94,-0.2109) (81.04,-0.2142) (83.15,-0.2173) (85.26,-0.2202) (87.36,-0.223) (89.47,-0.2257) (91.57,-0.2281) (93.68,-0.2303) (95.79,-0.2324) (97.89,-0.2342) (100,-0.2358)}
\def\DATAobjDopt{(57.93,-0.1857)}
\def\DATApathA{(0,1) (0.027778,0.99966) (0.055556,0.99931) (0.083333,0.99896) (0.11111,0.9986) (0.13889,0.99826) (0.16667,0.99791) (0.19444,0.99757) (0.22222,0.99721) (0.25,0.99685) (0.27778,0.99648) (0.30556,0.99611) (0.33333,0.99573) (0.36111,0.99535) (0.38889,0.99498) (0.41667,0.9946) (0.44444,0.99422) (0.47222,0.99385) (0.5,0.99347) (0.52778,0.99308) (0.55556,0.99271) (0.58333,0.99234) (0.61111,0.99197) (0.63889,0.99161) (0.66667,0.99125) (0.69444,0.9909) (0.72222,0.99054) (0.75,0.99016) (0.77778,0.98979) (0.80556,0.9894) (0.83333,0.98902) (0.86111,0.98863) (0.88889,0.98824) (0.91667,0.98786) (0.94444,0.98747) (0.97222,0.9871) (1,0.98673) (1.0278,0.98636) (1.0556,0.98599) (1.0833,0.98562) (1.1111,0.98524) (1.1389,0.98486) (1.1667,0.98448) (1.1944,0.9841) (1.2222,0.98373) (1.25,0.98336) (1.2778,0.98299) (1.3056,0.98262) (1.3333,0.98226) (1.3611,0.98191) (1.3889,0.98156) (1.4167,0.9812) (1.4444,0.98085) (1.4722,0.98051) (1.5,0.98016)}
\def\DATApathB{(0,1) (0.027778,1.0001) (0.055556,1.0002) (0.083333,1.0003) (0.11111,1.0004) (0.13889,1.0005) (0.16667,1.0006) (0.19444,1.0007) (0.22222,1.0008) (0.25,1.0008) (0.27778,1.0009) (0.30556,1.001) (0.33333,1.0011) (0.36111,1.0012) (0.38889,1.0013) (0.41667,1.0013) (0.44444,1.0014) (0.47222,1.0015) (0.5,1.0016) (0.52778,1.0017) (0.55556,1.0018) (0.58333,1.0018) (0.61111,1.0019) (0.63889,1.002) (0.66667,1.0021) (0.69444,1.0022) (0.72222,1.0023) (0.75,1.0024) (0.77778,1.0024) (0.80556,1.0025) (0.83333,1.0026) (0.86111,1.0027) (0.88889,1.0028) (0.91667,1.0028) (0.94444,1.0029) (0.97222,1.003) (1,1.0031) (1.0278,1.0032) (1.0556,1.0032) (1.0833,1.0033) (1.1111,1.0034) (1.1389,1.0035) (1.1667,1.0036) (1.1944,1.0036) (1.2222,1.0037) (1.25,1.0038) (1.2778,1.0039) (1.3056,1.004) (1.3333,1.0041) (1.3611,1.0042) (1.3889,1.0042) (1.4167,1.0043) (1.4444,1.0044) (1.4722,1.0045) (1.5,1.0046)}
\def\DATApathC{(0,1) (0.027778,1.0001) (0.055556,1.0002) (0.083333,1.0003) (0.11111,1.0004) (0.13889,1.0005) (0.16667,1.0006) (0.19444,1.0007) (0.22222,1.0009) (0.25,1.001) (0.27778,1.0011) (0.30556,1.0012) (0.33333,1.0013) (0.36111,1.0014) (0.38889,1.0015) (0.41667,1.0016) (0.44444,1.0017) (0.47222,1.0018) (0.5,1.0019) (0.52778,1.002) (0.55556,1.0021) (0.58333,1.0022) (0.61111,1.0023) (0.63889,1.0024) (0.66667,1.0025) (0.69444,1.0026) (0.72222,1.0028) (0.75,1.0029) (0.77778,1.003) (0.80556,1.0031) (0.83333,1.0032) (0.86111,1.0033) (0.88889,1.0034) (0.91667,1.0035) (0.94444,1.0036) (0.97222,1.0037) (1,1.0038) (1.0278,1.0039) (1.0556,1.004) (1.0833,1.0041) (1.1111,1.0042) (1.1389,1.0043) (1.1667,1.0044) (1.1944,1.0045) (1.2222,1.0047) (1.25,1.0048) (1.2778,1.0049) (1.3056,1.005) (1.3333,1.0051) (1.3611,1.0052) (1.3889,1.0053) (1.4167,1.0054) (1.4444,1.0055) (1.4722,1.0056) (1.5,1.0057)}
\def\DATAdens{(10,0.719) (18.18,0.9238) (23.68,0.9862) (28.13,1) (31.96,0.9893) (35.39,0.9647) (38.5,0.9316) (41.39,0.8936) (44.08,0.8529) (46.63,0.8107) (49.03,0.7683) (51.33,0.7262) (53.53,0.6849) (55.64,0.6448) (57.67,0.6062) (59.64,0.5691) (61.54,0.5336) (63.38,0.4998) (65.18,0.4678) (66.92,0.4374) (68.62,0.4087) (70.28,0.3816) (71.9,0.3561) (73.48,0.3322) (75.04,0.3096) (76.56,0.2885) (78.05,0.2687) (79.51,0.2502) (80.95,0.2328) (82.36,0.2166) (83.74,0.2015) (85.11,0.1873) (86.45,0.1741) (87.77,0.1618) (89.08,0.1503) (90.36,0.1396) (91.63,0.1296) (92.88,0.1204) (94.11,0.1117) (95.33,0.1037) (96.53,0.09621) (97.71,0.08926) (98.89,0.08279) (100,0.07679) (101.2,0.0712) (102.3,0.06602) (103.4,0.0612) (104.5,0.05673) (105.6,0.05258) (106.7,0.04872) (107.8,0.04515) (108.9,0.04183) (109.9,0.03875) (111,0.03589) (112,0.03324) (113,0.03079) (114,0.02851) (115,0.0264) (116,0.02444) (117,0.02263) (118,0.02095) (119,0.01939) (119.9,0.01795) (120.9,0.01661) (121.9,0.01537) (122.8,0.01423) (123.7,0.01316) (124.7,0.01218) (125.6,0.01127) (126.5,0.01043)}
\def\DATAfeeoverlay{(10,25.07) (18.18,25.41) (23.68,25.88) (28.13,26.43) (31.96,27.03) (35.39,27.67) (38.5,28.35) (41.39,29.05) (44.08,29.78) (46.63,30.52) (49.03,31.28) (51.33,32.06) (53.53,32.85) (55.64,33.66) (57.67,34.48) (59.64,35.32) (61.54,36.17) (63.38,37.04) (65.18,37.93) (66.92,38.84) (68.62,39.77) (70.28,40.72) (71.9,41.7) (73.48,42.7) (75.04,43.73) (76.56,44.79) (78.05,45.89) (79.51,47.03) (80.95,48.21) (82.36,49.45) (83.74,50.73) (85.11,52.09) (86.45,53.52) (87.77,55.05) (89.08,56.68) (90.36,58.45) (91.63,60.39) (92.88,62.56) (94.11,65.05) (95.33,68.04) (96.53,71.97) (97.71,90.57) (98.89,100) (100,100) (101.2,100) (102.3,100) (103.4,100) (104.5,100) (105.6,100) (106.7,100) (107.8,100) (108.9,100) (109.9,100) (111,100) (112,100) (113,100) (114,100) (115,100) (116,100) (117,100) (118,100) (119,100) (119.9,100) (120.9,100) (121.9,100) (122.8,100) (123.7,100) (124.7,100) (125.6,100) (126.5,100)}

\begin{document}

\title{\textbf{Optimal Dynamic Fees for Automated Market Makers:\\[2pt]
A Stochastic Control Approach to Loss-Versus-Rebalancing}}

\author{Farbod Ghasemlu\thanks{Independent researcher.
\href{mailto:farbodghasemlu@outlook.com}{farbodghasemlu@outlook.com}.
ORCID: \href{https://orcid.org/0009-0009-2303-5672}{0009-0009-2303-5672}.}}

\date{This version: June 2026}

\maketitle

\begin{abstract}
\noindent
We study the fee policy of a liquidity provider (LP) in a constant-product
automated market maker (AMM) whose fee can be adjusted continuously, as enabled
by programmable hooks. Building on the loss-versus-rebalancing (LVR) framework
of \citet{milionis2022lvr} and its extension to nonzero fees by
\citet{milionis2024fees}, we model the LP's wealth relative to the continuously
rebalanced benchmark as a controlled process in which the fee governs two
opposing forces: it raises revenue per uninformed trade while discouraging
uninformed volume, and it widens the no-arbitrage band, which lowers the rate at
which arbitrageurs extract value. Because the fee enters only the drift of
relative wealth and never its diffusion, the LP's expected-utility problem
reduces to an ergodic control problem whose solution is a pointwise volatility
feedback. We prove that the growth-optimal fee is independent of the LP's wealth
and of its constant relative risk aversion, that it collapses to a static
constant when volatility is constant, and that it is strictly increasing in
instantaneous variance, so that the optimal schedule is pro-cyclical. When
volatility is stochastic we characterise the optimal fee through a scalar
ergodic Hamilton-Jacobi-Bellman equation and a linear Poisson equation, solved
by a finite-difference scheme. We further show that the optimal fee is invariant
to price jumps under logarithmic preferences, relate the optimal fee to a
stylised model of competition among venues, and treat gas costs through an
impulse-control dead-band. In a
calibration to liquid large-capitalisation conditions, the optimal dynamic fee
weakly dominates every static and volatility-linked heuristic fee on each
simulated path, improving the LP's growth rate over the best static fee by a
modest but uniformly positive margin, with a dead-band rendering gas costs
negligible.
\end{abstract}

\noindent\textbf{Keywords:} automated market makers; loss-versus-rebalancing;
dynamic fees; liquidity provision; stochastic control; ergodic Hamilton-Jacobi-Bellman equations; decentralised finance.

\medskip
\noindent\textbf{MSC 2020:} 93E20, 91G80, 60H30. \quad
\textbf{JEL:} G12, G14, C61, D47.

\section{Introduction}
\label{sec:intro}

Automated market makers have become the dominant venue for on-chain trading.
A constant-function market maker holds reserves of two assets and quotes prices
algorithmically through an invariant of those reserves, allowing anyone to supply
liquidity and to trade without an order book. The canonical design is the
constant-product rule popularised by Uniswap. Liquidity providers (LPs) deposit
assets and earn a share of trading fees, but their returns are eroded by adverse
selection: when the external price of the risky asset moves, arbitrageurs trade
against the pool before its quoted price adjusts, capturing the difference.

\citet{milionis2022lvr} quantified this cost through \emph{loss-versus-rebalancing}
(LVR), the shortfall of the LP relative to a portfolio that holds the same
instantaneous risky exposure but rebalances at the external price. In the
frictionless continuous-time limit, LVR accrues at a rate proportional to the
instantaneous variance of the reference price and to the convexity of the LP's
holdings, independently of the fee. The role of the fee is more subtle and was
made precise by \citet{milionis2024fees}: a proportional fee creates a
no-arbitrage band, so that arbitrageurs trade only when the mispricing exceeds
the fee. As a result the value extracted by arbitrageurs, and hence the LP's
realised adverse-selection loss, is a \emph{decreasing} function of the fee. The
fee is therefore the LP's primary instrument for managing adverse selection,
exactly as the bid-ask spread is for a limit-order-book market maker
\citep{ho1981,avellaneda2008,gueant2013}.

Until recently, AMM fees were fixed at deployment. The introduction of hooks in
Uniswap~v4 \citep{adams2024uniswapv4} makes the fee a programmable, state-dependent
quantity that can be updated per block or per swap. This raises a control
question that the present paper answers: \emph{how should an LP set the fee
dynamically so as to maximise risk-adjusted returns, given that the fee
simultaneously governs uninformed revenue and adverse-selection losses?}

\paragraph{Approach and main results.}
We work in the continuous-arbitrage limit and track the LP's wealth relative to
the rebalancing benchmark. The fee $f_t\in[0,\fmax]$ is an adapted control. The
instantaneous excess growth rate of the LP over the benchmark is
\begin{equation}
\label{eq:m-intro}
m(f;v) \;=\; \underbrace{f\,\nu(f)}_{\text{uninformed fee income}}
\;-\; \underbrace{A(f;v)}_{\text{adverse selection}},
\end{equation}
where $v$ is the instantaneous variance of the reference price, $\nu(f)$ is the
uninformed turnover (decreasing in the fee), and $A(f;v)$ is the
adverse-selection rate implied by the band mechanism of \citet{milionis2024fees}
(decreasing in the fee, increasing in variance). A central structural fact, which
follows from the very definition of LVR, is that relative to the rebalancing
benchmark the fee affects only the drift of LP wealth and never its diffusion.
This yields four results.

First (\cref{thm:pointwise,prop:invariance}), the LP's expected-utility problem
reduces to an ergodic control problem whose growth-optimal control is the
pointwise maximiser $f^*(v)=\argmax_f m(f;v)$. The optimal fee is independent of
the LP's wealth and, for constant relative risk aversion (CRRA), independent of
the risk-aversion coefficient. This corrects a common temptation to make the fee
wealth-dependent and isolates volatility as the only relevant state.

Second (\cref{cor:static}), when volatility is constant the optimal fee is a
static constant, providing a formal justification for fixed fee tiers in
stationary markets and showing that the value of dynamic adjustment derives
entirely from time variation in volatility.

Third (\cref{prop:procyclical}), the optimal fee is strictly increasing in
instantaneous variance: the optimal schedule is \emph{pro-cyclical}. Higher
volatility makes adverse selection both larger and more fee-sensitive, so the LP
optimally widens the band by raising the fee. This places the volatility-linked
fee heuristics used in practice on a rigorous footing and reverses the conclusion
one would reach from a model in which the fee is mistakenly assumed to increase
adverse selection.

Fourth, when volatility follows a Cox-Ingersoll-Ross (CIR) process we characterise
the optimal fee and the LP's growth rate through a scalar ergodic
Hamilton-Jacobi-Bellman (HJB) equation and an associated linear Poisson equation
(\cref{sec:numerics}), solved by a finite-difference scheme; we show invariance
of the optimal fee to price jumps under logarithmic preferences
(\cref{prop:jumps}), discuss how the optimal fee behaves under competition among
venues, drawing on the equilibrium analysis of \citet{baggiani2026competition}
(\cref{sec:competition}), and handle gas costs through an impulse-control
dead-band (\cref{sec:gas}). A calibration to liquid large-capitalisation
conditions (\cref{sec:empirics}) shows that the optimal dynamic fee weakly
dominates every static and heuristic alternative on each simulated path, with a
modest but uniformly positive improvement over the best static fee.

\paragraph{Related literature.}
The paper sits at the intersection of three strands. The first is the economics
of AMMs and LVR. \citet{milionis2022lvr} introduced LVR and its continuous-time
rate; \citet{milionis2024fees} extended the analysis to nonzero fees and derived
the no-arbitrage-band representation and the closed-form arbitrage rate that we
adopt, with \citet{nezlobin2025lvr} providing the deterministic-block-time
analogue. \citet{cartea2023predictable,cartea2024predictable} develop the
closely related notion of predictable loss and study optimal liquidity provision
and concentrated positions. Foundational analyses of constant-function markets
include \citet{angeris2021uniswap} and \citet{angeris2020oracles}, and
\citet{bergault2024amm} design pricing functions through a mean-variance analysis
of LP payoffs. Complementary to fee setting, \citet{bergault2025exit} characterise
the LP's optimal liquidity-withdrawal time as an endogenous stopping problem,
which makes explicit that the fee shapes not only per-unit income but also the
provider's incentive to remain in the pool. Strategic and equilibrium aspects of
provision are studied by \citet{fan2023strategic} and \citet{capponi2021dex}.

The second strand is optimal market making and inventory control in classical
microstructure, beginning with \citet{ho1981} and \citet{avellaneda2008} and
developed by \citet{gueant2013}; textbook treatments are
\citet{cartea2015book} and \citet{gueant2016book}. The AMM problem differs in
that the LP's inventory is mechanically tied to the invariant, arbitrageurs
enforce a no-arbitrage band rather than trading strategically, and the control is
a single fee rather than a two-sided quote.

The third strand, closest to our contribution, is the optimal design of AMM fees.
\citet{fritsch2021fees} and \citet{evans2021fees} study optimal static fees;
\citet{milionis2024myerson} cast liquidity provision as mechanism design;
\citet{adams2024amamm} auction the right to set the fee. Closest to our work are
recent stochastic-control treatments of dynamic AMM fees.
\citet{baggiani2025fees} obtain approximate closed-form dynamic fees in a model
with arbitrageurs and noise traders and show that inventory-linked fees tracking
the external price approximate the optimum, and \citet{baggiani2026competition}
extend that analysis to an approximate Nash equilibrium between competing venues.
\citet{campbell2025fees} study the determinants of liquidity-provider
profitability and the optimal fee in a reduced-form model with a parallel
centralised exchange, and \citet{aqsha2026reward} characterise, through a
leader-follower stochastic game, the equilibrium reward that a venue offers its
liquidity providers. Relative to these, our contribution is methodological and
structural: by measuring wealth against the rebalancing benchmark we obtain a
one-dimensional ergodic problem in volatility alone, prove wealth- and
risk-aversion-independence of the optimal fee, establish pro-cyclicality as a
sharp comparative static, and tie the adverse-selection term directly to the band
mechanism of \citet{milionis2024fees}. Our reduced form deliberately abstracts
from the inventory and volatility-stimulation channels that, in the richer model
of \citet{baggiani2025fees}, produce a two-regime fee alternating between
deterring arbitrage and stimulating noise flow; that abstraction is what yields
our clean monotone, pro-cyclical schedule. We do not claim to be the first to
optimise AMM fees by stochastic control, nor to treat competition, which
\citet{baggiani2026competition} analyse in equilibrium; we claim a transparent
reduction, exact structural results, and a calibrated evaluation.

\paragraph{Outline.}
\Cref{sec:model} develops the model and the controlled relative-wealth dynamics.
\Cref{sec:control} formulates the control problem, establishes the pointwise
reduction, and proves the verification and invariance results.
\Cref{sec:structure} analyses the structure of the optimal fee, including
pro-cyclicality and the constant-volatility corollary. \Cref{sec:numerics}
treats stochastic volatility through the ergodic HJB and Poisson equations and
describes the numerical scheme. \Cref{sec:extensions} covers jumps, competition,
and gas costs. \Cref{sec:empirics} reports the calibration and simulation study.
\Cref{sec:discussion} discusses protocol implications and \cref{sec:conclusion}
concludes. Proofs are collected in \cref{app:dynamics,app:proofs,app:cir}.

\section{Model}
\label{sec:model}

We work on a filtered probability space $(\Omega,\F,\{\F_t\}_{t\ge0},\Prob)$
satisfying the usual conditions, with all processes adapted.

\subsection{Reference price and volatility}

The risky asset has an exogenous reference price $S_t$ formed on deep external
venues. Its instantaneous variance is $v_t=\sigma_t^2$, and
\begin{equation}
\label{eq:price}
\frac{\dd S_t}{S_t} \;=\; \mu\dd t + \sqrt{v_t}\,\dd Z_t,
\end{equation}
with $Z$ a Brownian motion. The drift $\mu$ plays no role in the LP's problem
under the preferences considered below and is suppressed henceforth. In the base
model $v_t\equiv v$ is constant; from \cref{sec:numerics} onward $v_t$ follows the
CIR process
\begin{equation}
\label{eq:cir}
\dd v_t \;=\; \kappa(\theta-v_t)\dd t + \xi\sqrt{v_t}\,\dd B_t,
\qquad \dd\langle Z,B\rangle_t = \rho\dd t,
\end{equation}
with $\kappa,\theta,\xi>0$ satisfying the Feller condition $2\kappa\theta\ge\xi^2$,
so that $v_t>0$ almost surely \citep{cir1985}.

\subsection{The constant-product pool}

The pool holds reserves $(x_t,y_t)$ of the numéraire and the risky asset, subject
to the constant-product invariant $x_ty_t=k$ for a constant $k>0$. The marginal
(pool) price is $P_t=y_t/x_t$, and the holdings as functions of the pool price are
$x_t=\sqrt{k/P_t}$ and $y_t=\sqrt{kP_t}$. Marked at the reference price $S_t$, the
LP's wealth is
\begin{equation}
\label{eq:Wlp}
W^{\mathrm{LP}}_t \;=\; x_t + S_t\,y_t
\;=\; \sqrt{k}\left(\frac{1}{\sqrt{P_t}} + S_t\sqrt{P_t}\right).
\end{equation}
Define the log-mispricing $z_t=\log(P_t/S_t)$. In the frictionless limit
arbitrage enforces $P_t=S_t$ (that is, $z_t=0$) and \cref{eq:Wlp} reduces to
$2\sqrt{kS_t}$, the value of the continuously rebalanced portfolio.

\subsection{Arbitrage, the no-arbitrage band, and adverse selection}
\label{sec:band}

A competitive arbitrageur monitors $S_t$ and $P_t$ and may trade against the pool
at the prevailing fee $f_t\in[0,\fmax]$, $\fmax<1$. A symmetric proportional fee
makes arbitrage profitable only when the mispricing exceeds the fee in log terms,
$|z_t|>\log(1+f_t)$. To leading order in the fee we identify the band half-width
with $f_t$ and write the no-arbitrage band as $[-f_t,f_t]$. Blocks arrive as a
Poisson process of rate $\lambda$; at each arrival, if the mispricing lies outside
the band, the arbitrageur trades and resets $z_t$ to the nearest edge, otherwise
nothing happens. Between arrivals $z_t$ moves with the reference price.

\citet{milionis2024fees} solve the resulting ergodic problem for $z$ and obtain,
for the constant-product pool, the LP's adverse-selection loss rate per unit pool
value. Writing the probability that an arriving block carries a profitable trade
as
\begin{equation}
\label{eq:ptrade}
P_{\mathrm{trade}}(f,v) \;=\; \frac{1}{1+\eta(f,v)},
\qquad
\eta(f,v) \;=\; \sqrt{\tfrac{2\lambda}{v}}\;f,
\end{equation}
their result, to leading order in the fee and in the inverse block rate, is
\begin{equation}
\label{eq:Aexact}
A(f;v) \;=\; \frac{v}{8}\,P_{\mathrm{trade}}(f,v)\,
\bigl(1+O(f)\bigr)\bigl(1+O(\lambda^{-1})\bigr)
\;=\; \frac{v}{8}\,\frac{1}{1+\sqrt{2\lambda/v}\,f},
\end{equation}
valid under the regularity condition $v<8\lambda$ that excludes a degenerate
regime of unbounded arbitrage profit. The factor $v/8$ is precisely the
frictionless LVR rate of \citet{milionis2022lvr} for a constant-product pool
(\cref{app:dynamics}), and $P_{\mathrm{trade}}$ is the fraction of it that
survives the fee. We adopt \cref{eq:Aexact} as our adverse-selection rate. Two of
its properties are essential and are recorded for later use.

\begin{lemma}[Adverse selection]
\label{lem:A}
For each $v\in(0,8\lambda)$ the map $f\mapsto A(f;v)$ on $[0,\fmax]$ is positive,
strictly decreasing, and strictly convex, with $A(0;v)=v/8$. Moreover
$A(f;v)$ is strictly increasing in $v$, and $-\partial_f A(f;v)$ is strictly
increasing in $v$. In the fast-block regime $\eta\gg1$,
\begin{equation}
\label{eq:Afast}
A(f;v) \;=\; \frac{v^{3/2}}{8\sqrt{2\lambda}\,f}\,\bigl(1+o(1)\bigr)
\;=:\; \frac{b\,v^{3/2}}{f}\,\bigl(1+o(1)\bigr),
\qquad b:=\frac{1}{8\sqrt{2\lambda}}.
\end{equation}
\end{lemma}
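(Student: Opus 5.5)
The lemma concerns the closed form $A(f;v)=\frac{v}{8}\,\frac{1}{1+c(v)f}$ with $c(v):=\sqrt{2\lambda/v}>0$. The plan is to prove each claim by direct differentiation of this expression, treating $f\in[0,\fmax]$ and $v\in(0,8\lambda)$ as real variables.

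\emph{Properties in $f$.} Positivity is immediate, since $v>0$ and $1+c(v)f\ge1$. Setting $f=0$ gives $A(0;v)=v/8$. Differentiating in $f$,
\[
\partial_f A=-\frac{v}{8}\,\frac{c}{(1+cf)^2}<0,
\qquad
\partial_f^2 A=\frac{v}{8}\,\frac{2c^2}{(1+cf)^3}>0,
\]
which gives strict decrease and strict convexity on $[0,\fmax]$.

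\emph{Monotonicity in $v$.} Here I will rewrite in a form where $v$ enters transparently. Since $v\,c(v)=\sqrt{2\lambda v}$, we have
\[
A(f;v)=\frac{1}{8}\,\frac{v}{1+\sqrt{2\lambda/v}\,f}=\frac{1}{8}\,\frac{v^{3/2}}{\sqrt v+\sqrt{2\lambda}\,f}.
\]
Set $s=\sqrt v$, so $A=\tfrac18\,s^3/(s+a)$ with $a=\sqrt{2\lambda}f\ge0$. Then
\[
\frac{d}{ds}\,\frac{s^3}{s+a}=\frac{s^2(2s+3a)}{(s+a)^2}>0 ,
\]
and since $s$ is increasing in $v$, $A$ is strictly increasing in $v$.

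For the sensitivity, note that $-\partial_f A=\frac{v c}{8(1+cf)^2}=\frac{\sqrt{2\lambda v}}{8(1+cf)^2}$. Rewriting in terms of $s$,
\[
-\partial_f A=\frac{\sqrt{2\lambda}}{8}\,\frac{s^3}{(s+a)^2},
\]
and
\[
\frac{d}{ds}\,\frac{s^3}{(s+a)^2}=\frac{s^2(s+3a)}{(s+a)^3}>0 .
\]
So $-\partial_f A$ is strictly increasing in $v$. At $f=0$ this derivative reduces to $s^0\cdot 1>0$ up to the positive factor, so the claim still holds there.

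\emph{Fast-block asymptotics.} For $\eta=cf\gg1$ we have $1+\eta=\eta\,(1+\eta^{-1})$, hence
\[
A=\frac{v}{8\eta}\bigl(1+O(\eta^{-1})\bigr)=\frac{v^{3/2}}{8\sqrt{2\lambda}\,f}\bigl(1+o(1)\bigr),
\]
which identifies $b=1/(8\sqrt{2\lambda})$.

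\emph{Main obstacle.} The hard part is not computational but interpretive. The restriction $v<8\lambda$ plays no role in the closed form; I will note that it is inherited from the validity of the underlying approximation in \cref{eq:Aexact} rather than needed in the argument. Likewise, the $(1+O(f))(1+O(\lambda^{-1}))$ corrections are discarded: the lemma is proved for the adopted closed form itself, not for the exact rate.
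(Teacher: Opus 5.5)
Your proposal is correct and follows essentially the same route as the paper: you differentiate the closed form $A=(v/8)(1+cf)^{-1}$ directly for the properties in $f$, and you use $(1+\eta)^{-1}\sim\eta^{-1}$ for the fast-block form. The only cosmetic difference is in the $v$-monotonicity. The paper writes $A$ and $-\partial_f A$ as a strictly increasing factor ($v/8$, resp.\ $\sqrt{2\lambda v}/8$) times a weakly increasing factor ($(1+cf)^{-1}$, resp.\ $(1+cf)^{-2}$), whereas you substitute $s=\sqrt v$ and compute the derivatives explicitly; both arguments are valid.
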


\noindent The monotonicity in $f$ formalises the central economic mechanism: a
higher fee narrows the flow of value to arbitrageurs. The monotonicity of
$-\partial_f A$ in $v$ states that fee protection is more valuable when volatility
is higher, and is the engine of pro-cyclicality. \Cref{lem:A} is proved in
\cref{app:proofs}.

\subsection{Uninformed flow and fee income}

Uninformed (noise) traders generate turnover that is sensitive to the fee. We
model the uninformed turnover per unit pool value per unit time by a smooth,
positive, strictly decreasing function $\nu(\cdot)$ and adopt throughout the
constant-semielasticity specification
\begin{equation}
\label{eq:nu}
\nu(f) \;=\; \nu_0\,e^{-\alpha f},
\qquad \nu_0>0,\ \alpha\ge0.
\end{equation}
The fee income from uninformed flow, per unit pool value, is $f\,\nu(f)$, which is
hump-shaped in the fee with an interior peak at $f=1/\alpha$ when $\alpha>0$. The
parameter $\alpha$ is the semielasticity of uninformed volume to the fee; when
$\alpha=0$ uninformed volume is perfectly inelastic.

\subsection{Relative wealth and the controlled dynamics}
\label{sec:reldyn}

Let $W^{\mathrm{reb}}_t$ denote the rebalancing benchmark, namely the portfolio
that holds the AMM's instantaneous risky exposure but transacts at the reference
price. By construction $W^{\mathrm{reb}}$ has the same instantaneous delta as the
AMM, so the two wealth processes share the same diffusion; their difference is of
finite variation and equals the (fee-adjusted) LVR net of fee income
\citep{milionis2022lvr,milionis2024fees}. Define the relative wealth
$R_t=W^{\mathrm{LP}}_t/W^{\mathrm{reb}}_t$. Collecting the fee income
\cref{eq:nu} and the adverse-selection rate \cref{eq:Aexact}, the relative wealth
obeys (\cref{app:dynamics})
\begin{equation}
\label{eq:Rdyn}
\dlog R_t \;=\; m(f_t;v_t)\,\dd t,
\qquad
m(f;v) \;:=\; f\,\nu(f) - A(f;v),
\end{equation}
with \emph{no diffusion term}. The absolute LP wealth satisfies
$\log W^{\mathrm{LP}}_t = \log W^{\mathrm{reb}}_t + \log R_t$, where the benchmark
$W^{\mathrm{reb}}$ does not depend on the fee. Thus the fee influences LP wealth
only through the finite-variation channel \cref{eq:Rdyn}. This is the structural
property that drives every result in \cref{sec:control}.

\begin{assumption}
\label{ass:admissible}
The control $f=(f_t)_{t\ge0}$ is progressively measurable with values in
$[0,\fmax]$, where $0<\fmax<1$ and $\fmax<\sqrt{8\lambda}$. The functions $\nu$
and $A(\cdot;v)$ are continuous on $[0,\fmax]$ for each $v$, and $v_t$ takes
values in a compact subset of $(0,8\lambda)$ or is the CIR process \cref{eq:cir}.
\end{assumption}

\Cref{ass:admissible} guarantees that $m(\cdot;\cdot)$ is bounded and continuous
on $[0,\fmax]\times(0,8\lambda)$ and that the integrals appearing below are well
defined. We denote by $\Adm$ the set of admissible controls.

\section{The control problem and the optimal fee}
\label{sec:control}

\subsection{Objectives}

We consider two standard objectives. For a finite horizon $T$ and CRRA utility
$U(w)=w^{1-\gamma}/(1-\gamma)$ with $\gamma>0$, $\gamma\neq1$ (and $U(w)=\log w$
for $\gamma=1$), the LP maximises expected utility of terminal wealth,
\begin{equation}
\label{eq:finite}
\sup_{f\in\Adm}\ \E\bigl[U(W^{\mathrm{LP}}_T)\bigr].
\end{equation}
For the infinite-horizon problem the LP maximises the asymptotic growth rate of
wealth,
\begin{equation}
\label{eq:ergodic}
\Lambda \;:=\; \sup_{f\in\Adm}\ \liminf_{T\to\infty}\frac1T\,
\E\bigl[\log W^{\mathrm{LP}}_T\bigr].
\end{equation}

\subsection{Pointwise reduction}

The decomposition $\log W^{\mathrm{LP}}_t=\log W^{\mathrm{reb}}_t+\int_0^t
m(f_s;v_s)\dd s$ from \cref{sec:reldyn} separates an uncontrolled benchmark term
from a controlled finite-variation term. Because the controlled term carries no
diffusion, the control enters neither the volatility of terminal wealth nor any
risk premium; it affects only the conditional mean of $\log R_T$. This yields the
following reduction.

\begin{theorem}[Growth-optimal fee]
\label{thm:pointwise}
Suppose $v_t$ is ergodic with stationary law $\pi$ on $(0,8\lambda)$ and that for
each $v$ the map $f\mapsto m(f;v)$ attains its maximum on $[0,\fmax]$ at a unique
point $f^*(v)$. Then the feedback control $f_t=f^*(v_t)$ is optimal for the
ergodic problem \cref{eq:ergodic}, the optimal growth rate is
\begin{equation}
\label{eq:Lambda}
\Lambda \;=\; \Lambda^{\mathrm{reb}} + \int_{(0,8\lambda)} m\bigl(f^*(v);v\bigr)\,\pi(\dd v),
\end{equation}
where the constant $\Lambda^{\mathrm{reb}}=\lim_{T\to\infty}T^{-1}\E[\log W^{\mathrm{reb}}_T]$
does not depend on the fee, and $f^*(v)=\argmax_{f\in[0,\fmax]}m(f;v)$ is a pure
feedback in the instantaneous variance, independent of wealth.
\end{theorem}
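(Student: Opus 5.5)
The plan is to exploit the decomposition $\log W^{\mathrm{LP}}_T=\log W^{\mathrm{reb}}_T+\int_0^T m(f_s;v_s)\dd s$ from \cref{sec:reldyn}. Taking expectations and dividing by $T$ gives
\begin{equation*}
\frac1T\E\bigl[\log W^{\mathrm{LP}}_T\bigr] \;=\; \frac1T\E\bigl[\log W^{\mathrm{reb}}_T\bigr] + \frac1T\E\Bigl[\int_0^T m(f_s;v_s)\dd s\Bigr].
\end{equation*}
The first term is fee-independent, because $W^{\mathrm{reb}}$ does not depend on the control, and it converges to $\Lambda^{\mathrm{reb}}$; I will take the existence of this limit as part of the hypotheses or verify it separately. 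The whole problem therefore reduces to maximising the long-run average of $\E\int_0^T m(f_s;v_s)\dd s$ over admissible $f$. Boundedness of $m$ on $[0,\fmax]\times(0,8\lambda)$ under \cref{ass:admissible} makes all expectations finite and allows Fubini.

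\textbf{Upper bound.} Set $m^*(v):=\max_{f\in[0,\fmax]}m(f;v)=m(f^*(v);v)$. For every admissible $f$ and every $(s,\omega)$ we have $m(f_s;v_s)\le m^*(v_s)$. Hence $\frac1T\E\int_0^T m(f_s;v_s)\dd s\le \frac1T\int_0^T\E[m^*(v_s)]\dd s$. This bound does not depend on the control at all, because $v$ is exogenous and the control does not affect its law. By ergodicity of $v$ with stationary law $\pi$, and boundedness of $m^*$, the right-hand side converges to $\int m^*\dd\pi$. Here I would invoke the ergodic theorem in its Ces\`aro form for the expectation; $m^*$ is continuous by Berge's maximum theorem, which suffices for the convergence of the time-averaged marginals.

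\textbf{Achievability and conclusion.} First I show that the feedback $f_t=f^*(v_t)$ is admissible, i.e.\ progressively measurable. Uniqueness of the maximiser together with continuity of $m$ makes $v\mapsto f^*(v)$ continuous by the maximum theorem, hence Borel. Thus $f^*(v_t)$ is progressive and $[0,\fmax]$-valued. For this control the inequality above is an equality pathwise, so its growth rate equals $\Lambda^{\mathrm{reb}}+\int m(f^*(v);v)\,\pi(\dd v)$ and the upper bound is attained, which gives \cref{eq:Lambda}. Independence of wealth is immediate, since $f^*$ is defined through $m(\cdot;v)$ alone and $m$ involves neither $R$ nor $W$.

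\textbf{Main obstacle.} I expect the hardest step to be the limit of the Ces\`aro average of $\E[m^*(v_s)]$. This needs the marginal laws of $v_s$ to converge to $\pi$, at least in the Ces\`aro sense, for bounded continuous test functions. For the CIR case I would use the known convergence of the CIR transition law to its Gamma stationary distribution, and in the compact-state case I would assume it directly from ergodicity. A second subtlety is the existence of the limit defining $\Lambda^{\mathrm{reb}}$. The benchmark's log-wealth is $\tfrac12\log S_T$ plus a constant, so $\E\log W^{\mathrm{reb}}_T/T\to -\tfrac14\int v\,\pi(\dd v)$ with $\mu$ suppressed, which again follows from ergodicity and integrability of $v$ under $\pi$.
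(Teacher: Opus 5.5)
Your argument is the paper's: bound $m(f_s;v_s)\le\bar m(v_s)$ pathwise, attain the bound with the feedback $f^*(v_s)$, and pass to the limit by ergodicity of $v$. The differences are minor: you work with expectations (Ces\`aro convergence of the marginals of $v$, which is exactly what the $\liminf$ in \eqref{eq:ergodic} needs) where the paper invokes the almost-sure and $L^1$ ergodic theorem, and you add the check, via Berge's theorem, that $f^*$ is Borel, so the feedback is admissible.

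One side remark is off but harmless, since the proof uses only the existence and fee-independence of $\Lambda^{\mathrm{reb}}$. The benchmark is not $2\sqrt{kS_T}$; that is the frictionless pool value. Because $W^{\mathrm{reb}}$ is fee-independent, evaluating \eqref{eq:Rdyn} at $f\equiv0$ gives $W^{\mathrm{reb}}_T=2\sqrt{kS_T}\exp\bigl(\tfrac18\int_0^T v_s\dd s\bigr)$. So your explicit value $-\tfrac14\int v\,\pi(\dd v)$ is inconsistent with the model. The same Ces\`aro argument on $\E[v_s]$ still gives existence of the limit.
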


\begin{proof}[Proof sketch]
By \cref{eq:Rdyn}, $\frac1T\log R_T=\frac1T\int_0^T m(f_s;v_s)\dd s$. For any
admissible $f$ this is bounded above pathwise by
$\frac1T\int_0^T \bar m(v_s)\dd s$ with $\bar m(v):=\max_f m(f;v)=m(f^*(v);v)$,
with equality under the feedback $f^*(v_s)$. Ergodicity of $v$ and boundedness of
$\bar m$ (\cref{ass:admissible}) give, by the ergodic theorem,
$\frac1T\int_0^T\bar m(v_s)\dd s\to\int\bar m\,\dd\pi$ almost surely and in $L^1$.
Adding the fee-independent benchmark term yields \cref{eq:Lambda} and optimality.
A full argument, including the verification that the candidate value function
solves the ergodic HJB equation, is given in \cref{app:proofs}.
\end{proof}

The same separation governs the finite-horizon CRRA problem and removes any
dependence on the risk-aversion coefficient.

\begin{proposition}[Wealth- and risk-aversion-independence]
\label{prop:invariance}
Under \cref{ass:admissible}, the maximiser of the finite-horizon problem
\cref{eq:finite} over feedback controls $f_t=f(t,v_t)$ is, for every CRRA
coefficient $\gamma>0$, the pointwise maximiser $f^*(t,v)=\argmax_{f}m(f;v)$,
which is independent of $t$, of the wealth level $w$, and of $\gamma$.
\end{proposition}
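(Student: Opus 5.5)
The plan is to repeat the pathwise dominance argument from the proof of \cref{thm:pointwise}, now applied to terminal utility instead of the growth rate. Two facts drive it. First, by \cref{eq:Rdyn},
\[
W^{\mathrm{LP}}_T \;=\; W^{\mathrm{reb}}_T\,\exp\Bigl(\int_0^T m(f_s;v_s)\dd s\Bigr).
\]
Second, neither $W^{\mathrm{reb}}$ nor the variance path $(v_s)$ depends on the fee. We set $\bar m(v):=\max_{f\in[0,\fmax]}m(f;v)$, so that $m(f_s;v_s)\le\bar m(v_s)$ for every $s$ and every $\omega$, for any admissible control. In particular this holds for any feedback control $f_s=f(s,v_s)$. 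Integrating in time gives $\int_0^T m(f_s;v_s)\dd s\le\int_0^T\bar m(v_s)\dd s$ pathwise.

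Every CRRA utility $U$ with $\gamma>0$ is strictly increasing on $(0,\infty)$. This holds for $\gamma<1$, for $\gamma=1$ (log), and for $\gamma>1$, where $U<0$ but is still increasing. Since $W^{\mathrm{reb}}_T>0$, monotonicity gives the pathwise inequality
\[
U\bigl(W^{\mathrm{LP},f}_T\bigr)\le U\bigl(W^{\mathrm{LP},f^*}_T\bigr),
\]
where $f^*$ denotes the feedback $f_s=f^*(v_s)$. Taking expectations then yields optimality of $f^*$.

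Two technical points must be checked. The first is admissibility of $f^*$. Under \cref{ass:admissible}, $m$ is continuous on $[0,\fmax]\times(0,8\lambda)$, and I will assume, as in \cref{thm:pointwise}, that the maximiser is unique. Berge's maximum theorem then makes $v\mapsto f^*(v)$ continuous. Hence $f^*(v_t)$ is progressively measurable and $[0,\fmax]$-valued.

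The second point, which I expect to be the main obstacle, is that the expectations are well defined. Since $m$ is bounded by some $C$, the exponential factor lies in $[e^{-CT},e^{CT}]$. The power case therefore reduces to integrability of $(W^{\mathrm{reb}}_T)^{1-\gamma}$, and the log case to integrability of $\log W^{\mathrm{reb}}_T$; both are fee-independent quantities.
\begin{itemize}
\item For $v$ in a compact set, $W^{\mathrm{reb}}=2\sqrt{kS}$ is a stochastic exponential with bounded volatility $\sqrt v/2$, so all moments are finite.
\item For CIR variance I would first try bounding $\E[\exp(c\int_0^T v_s\dd s)]$ using the affine structure, which is finite for $c$ small relative to $T$.
\item If this fails for large $|1-\gamma|$, I would fall back on the convention that both sides equal the same infinite value. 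The pathwise inequality still orders them, and the statement is then read for those $\gamma$ for which the value is finite.
\end{itemize}
Note that correlation $\rho$ between $Z$ and $B$ is harmless here, because the comparison is pathwise and never uses independence of $W^{\mathrm{reb}}$ and $v$.

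Finally, the independence claims follow from the form of $f^*$. It depends only on $v$, through $\bar m$, so it is independent of $t$ and of $\gamma$. Independence of the wealth level follows from homogeneity: an initial wealth $w$ scales $W^{\mathrm{reb}}$ linearly, which factors out of $U$ as $w^{1-\gamma}$ (or as an additive $\log w$) and does not affect the maximiser. For uniqueness among feedback controls, I will argue by strict monotonicity of $U$ and uniqueness of the argmax. If $f(s,v_s)\ne f^*(v_s)$ on a set of positive $\dd s\otimes\dd\Prob$ measure, then the time integral is strictly smaller on a set of positive probability, so the expected utility is strictly smaller whenever it is finite.
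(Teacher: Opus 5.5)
Your proof is correct and follows essentially the paper's route. Both arguments write $W^{\mathrm{LP}}_T=W^{\mathrm{reb}}_T\,e^{I(f)}$ with $I(f)=\int_0^T m(f_s;v_s)\dd s$, dominate $I(f)$ pathwise by the feedback $f^*(v_s)$, and use that terminal utility is increasing in $I(f)$. You invoke the monotonicity of $U$ directly, where the paper conditions on the volatility path and splits on the sign of $1-\gamma$. You also add the measurability, integrability and strict-uniqueness checks that the paper omits.

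One small correction to those checks. Under CIR variance $m$ is not bounded, because the process leaves $(0,8\lambda)$ and $A(f;v)$ grows like $v/8$. So bound $|m(f;v)|\le\fmax\nu_0+v/8$ and use moments of $\int_0^T v_s\dd s$ instead. This leaves your pathwise comparison untouched: for $\gamma\neq1$ the utility has a fixed sign, so the expectations are always defined in the extended reals.
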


\begin{proof}[Proof sketch]
Write $W^{\mathrm{LP}}_T=W^{\mathrm{reb}}_T\,R_T$ with
$\log R_T=\int_0^T m(f_s;v_s)\dd s$ a finite-variation functional of the
volatility path. Then
$U(W^{\mathrm{LP}}_T)=\tfrac{1}{1-\gamma}(W^{\mathrm{reb}}_T)^{1-\gamma}
\exp\bigl((1-\gamma)\!\int_0^T m\,\dd s\bigr)$.
Conditioning on the volatility path $(v_s)_{s\le T}$ and on the benchmark, the
control affects the conditional expectation only through
$\exp\bigl((1-\gamma)\int_0^T m\,\dd s\bigr)$, which for $\gamma<1$ is increasing
and for $\gamma>1$ is decreasing in $\int_0^T m\,\dd s$, while $U$ is increasing
for $\gamma<1$ and the prefactor $1/(1-\gamma)$ is negative for $\gamma>1$. In
both cases the objective is maximised by maximising $\int_0^T m(f_s;v_s)\dd s$,
which is achieved pathwise by $f_s=f^*(v_s)$. Independence of $w$ and $\gamma$
follows. Details are in \cref{app:proofs}.
\end{proof}

\begin{remark}
\Cref{prop:invariance} relies on the fact that, measured against the rebalancing
benchmark, the fee does not affect the diffusion of wealth. A model in which the
fee instead scaled the diffusion, or in which adverse selection were taken to
\emph{increase} in the fee, would generically produce a wealth- and
risk-aversion-dependent policy and would reverse the comparative statics of
\cref{sec:structure}. The structural results here are thus tied to the correct
sign of the fee in \cref{eq:Aexact}.
\end{remark}

\subsection{Verification via the ergodic HJB equation}
\label{sec:verification}

For completeness we record the dynamic-programming characterisation underlying
\cref{thm:pointwise}. For logarithmic utility and the CIR variance \cref{eq:cir},
seek a constant $\Lambda$ and a function $\psi\in C^2((0,8\lambda))$, the relative
value function, such that
\begin{equation}
\label{eq:hjb}
\Lambda \;=\; \max_{f\in[0,\fmax]}\Bigl\{\,m(f;v)\Bigr\} \;-\;\tfrac12 v \;+\; \Lop_v\psi(v),
\qquad
\Lop_v \;=\; \kappa(\theta-v)\partial_v + \tfrac12\xi^2 v\,\partial_{vv},
\end{equation}
where $-\tfrac12 v$ is the fee-independent volatility drag of log-wealth.
\Cref{eq:hjb} separates: the maximisation over $f$ is pointwise and yields
$f^*(v)$ and the maximised reward $\bar m(v)=m(f^*(v);v)$, after which
\cref{eq:hjb} becomes the linear Poisson equation
\begin{equation}
\label{eq:poisson}
\Lop_v\psi(v) \;=\; \Lambda + \tfrac12 v - \bar m(v),
\end{equation}
for the pair $(\Lambda,\psi)$, unique up to an additive constant in $\psi$.

\begin{theorem}[Verification]
\label{thm:verif}
Let $(\Lambda,\psi)$ solve \cref{eq:hjb} with $\psi\in C^2$ of at most polynomial
growth, and let $f^*(v)$ attain the maximum in \cref{eq:hjb}. If the controlled
variance \cref{eq:cir} is ergodic and $\bar m$ is $\pi$-integrable, then $\Lambda$
equals the optimal growth rate \cref{eq:ergodic} net of the benchmark, and
$f_t=f^*(v_t)$ is optimal. For any admissible control,
$\liminf_{T\to\infty}T^{-1}\E[\log R_T]\le\Lambda-\Lambda^{\mathrm{reb}}$, with
equality under $f^*$.
\end{theorem}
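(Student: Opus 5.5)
The plan is a standard ergodic verification argument: apply It\^o's formula to $\psi(v_t)$ and combine it with the exact pathwise identity $\log R_T=\int_0^T m(f_s;v_s)\dd s$ from \cref{eq:Rdyn}. For an arbitrary admissible $f$, It\^o's formula gives
\begin{equation*}
\psi(v_T)-\psi(v_0)=\int_0^T \Lop_v\psi(v_s)\dd s+\int_0^T \xi\sqrt{v_s}\,\psi'(v_s)\dd B_s .
\end{equation*}
By \cref{eq:hjb}, $\Lop_v\psi(v)=\Lambda+\tfrac12 v-\bar m(v)$, and $m(f;v)\le \bar m(v)$ for every $f\in[0,\fmax]$. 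Hence
\begin{equation*}
\int_0^T m(f_s;v_s)\dd s \;\le\; \Lambda T + \tfrac12\!\int_0^T v_s\dd s - \psi(v_T)+\psi(v_0) + M_T ,
\end{equation*}
where $M_T$ is the stochastic integral above. Equality holds when $f_s=f^*(v_s)$. Taking expectations, dividing by $T$ and letting $T\to\infty$ then gives the upper bound and its attainment. The $\tfrac12\int v_s\dd s$ term is the fee-independent volatility drag. Its long-run average is $\tfrac12\int v\,\pi(\dd v)=\theta/2$, and it belongs to the benchmark bookkeeping: together with $\Lambda$ it combines into $\Lambda-\Lambda^{\mathrm{reb}}$ as stated. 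I would set this up by observing that $\E\log W^{\mathrm{reb}}_T$ contributes exactly the $-\tfrac12\int v_s\dd s$ drift, so subtracting it leaves the inequality for $\E[\log R_T]$.

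The key steps, in order, are the following.

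\begin{enumerate}
\item Show that $M$ is a true martingale, so that $\E M_T=0$. Polynomial growth of $\psi$ does not by itself control $\psi'$. I would therefore either use the Poisson equation \cref{eq:poisson} to deduce polynomial growth of $\psi'$, by integrating the first-order ODE for $\psi'$ with the CIR speed measure, or localise with stopping times $\tau_n$ and pass to the limit using the moment bounds below.
\item Bound $T^{-1}\E|\psi(v_T)|\to0$. Since $\psi$ grows at most polynomially, this reduces to $\sup_t \E[v_t^p]<\infty$ for the CIR process. That holds because all moments of CIR are uniformly bounded in time: the affine moment ODEs are mean reverting.
\item Pass to the limit in the averages. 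Apply the ergodic theorem to $\bar m$, which is bounded by \cref{ass:admissible} and in any case $\pi$-integrable by hypothesis, and to $v$. Together with $L^1$ convergence from uniform integrability, this gives equality for $f^*$ with $\lim$ in place of $\liminf$. For a general $f$ only the inequality is needed, so $\liminf$ suffices.
\item Conclude that $f^*$ attains the supremum in \cref{eq:ergodic}. It then follows that $\Lambda-\Lambda^{\mathrm{reb}}=\int\bar m\,\dd\pi$, consistent with \cref{thm:pointwise}.
\end{enumerate}

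I expect the main obstacle to be step 1, together with the boundary behaviour of $\psi$. The CIR diffusion lives on $(0,\infty)$, while the domain in \cref{eq:hjb} is $(0,8\lambda)$. One must also ensure that $\sqrt{v}\psi'(v)$ is square integrable near $0$ and at the upper end.

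\begin{itemize}
\item My first attempt would be localisation with $\tau_n=\inf\{t: v_t\notin(1/n,n)\}$. Feller's condition makes $\tau_n\to\infty$ almost surely, and the terms are then controlled via Fatou's lemma and dominated convergence using the uniform moment bounds of step 2.
\item Fatou's lemma goes the right way for the upper bound on $\E[\log R_T]$, because $m$ is bounded above.
\item For the equality case, boundedness of $\bar m$ gives dominated convergence directly.
\end{itemize}

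If the domain mismatch is an issue, I would restrict to the compact-range case in \cref{ass:admissible}. There $\psi$ and $\psi'$ are bounded and the martingale property is immediate.
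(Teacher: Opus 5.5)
This is essentially the paper's proof: It\^o's formula on $\psi(v_t)$ combined with $\log R_T=\int_0^T m(f_s;v_s)\dd s$, the pointwise bound $m\le\bar m$ from \cref{eq:hjb}, CIR moment bounds to kill $\E[\psi(v_T)]/T$ and the martingale term, and the $\tfrac12 v$ drag absorbed into $\Lambda^{\mathrm{reb}}$ in the same way the paper does. Your localisation with $\tau_n$, and your observation that polynomial growth of $\psi$ does not control $\psi'$, tighten a step the paper only asserts; you should drop the compact-range fallback, because a CIR path is not confined to a compact set and the generator in \cref{eq:hjb} is the CIR one, so the natural way to handle the domain issue is to take $\psi$ on all of $(0,\infty)$ as in \cref{app:cir}.
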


\Cref{thm:verif} is proved in \cref{app:proofs} by the standard supermartingale
argument applied to $\log R_t+\psi(v_t)-(\Lambda-\Lambda^{\mathrm{reb}})t$;
existence of a polynomially bounded $\psi$ follows from the ergodicity of the CIR
process and the boundedness of $\bar m$ (\cref{app:cir}). We emphasise that the
optimal fee $f^*$ is obtained \emph{before} solving the Poisson equation; the
function $\psi$ is needed only to certify optimality and to compute $\Lambda$.

\section{Structure of the optimal fee}
\label{sec:structure}

We now study $f^*(v)=\argmax_{f\in[0,\fmax]}m(f;v)$.

\subsection{Existence and characterisation}

\begin{proposition}[Interior optimum]
\label{prop:foc}
Let $\alpha>0$. For each $v\in(0,8\lambda)$ the function $m(\cdot;v)$ satisfies
$m(0;v)=-v/8<0$ and $m(f;v)\to0^-$ as $f\to\infty$. If
$\sup_{f}m(f;v)>0$, the maximiser $f^*(v)$ is interior and any interior maximiser
solves the first-order condition
\begin{equation}
\label{eq:foc}
\nu_0\,e^{-\alpha f}\,(1-\alpha f) \;=\; -\,\partial_f A(f;v)
\;=\; \frac{(v/8)\sqrt{2\lambda/v}}{\bigl(1+\sqrt{2\lambda/v}\,f\bigr)^2}\;>\;0 .
\end{equation}
In particular every interior maximiser satisfies $f^*(v)>1/\alpha$.
\end{proposition}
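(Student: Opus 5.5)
The plan is to argue directly from the explicit forms \cref{eq:nu} and \cref{eq:Aexact}, with $m(\cdot;v)$ regarded as a smooth function on $[0,\infty)$. This matches the statement's use of the limit $f\to\infty$. One point must be settled first. Differentiating gives $\partial_f\bigl(f\nu_0e^{-\alpha f}\bigr)=\nu_0e^{-\alpha f}(1-\alpha f)$, so the stationarity condition $\partial_f m=0$ reads
\begin{equation*}
\nu_0e^{-\alpha f}(1-\alpha f)\;=\;\partial_f A(f;v)\;=\;-\frac{(v/8)\sqrt{2\lambda/v}}{\bigl(1+\sqrt{2\lambda/v}\,f\bigr)^2}\;<\;0 .
\end{equation*}
This is \cref{eq:foc} with the sign in front of $\partial_f A$ corrected. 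With the sign as printed in \cref{eq:foc}, positivity of the right-hand side would force $f<1/\alpha$, which contradicts the stated conclusion. The computation shows that the left-hand side of the stationarity condition equals $+\partial_f A$, which is negative. I will therefore prove the stated conclusion $f^*(v)>1/\alpha$ from this correctly signed condition, and state the sign correction explicitly in the write-up.

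The first step is to record the boundary values. At $f=0$ the fee income vanishes, so $m(0;v)=-A(0;v)=-v/8<0$ by \cref{lem:A}. As $f\to\infty$, the fee income $f\nu_0e^{-\alpha f}$ decays exponentially. In contrast, $A(f;v)=\tfrac v8\bigl(1+\sqrt{2\lambda/v}\,f\bigr)^{-1}$ is positive and decays only like $1/f$, as in \cref{eq:Afast}. Hence $m(f;v)=-A(f;v)\bigl(1+o(1)\bigr)<0$ for all large $f$, and $m(f;v)\to0^-$.

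The second step is existence and interiority. Suppose $\sup_f m(f;v)=:s>0$. Because $m(f;v)<0$ for all large $f$, there is an $F$ with $m<s/2$ on $[F,\infty)$. The supremum is therefore the maximum of the continuous function $m$ over the compact set $[0,F]$, so it is attained. Since $m(0;v)<0<s$, every maximiser lies in $(0,\infty)$, so it is interior. Because $m$ is $C^1$, the maximiser satisfies $\partial_f m(f^*;v)=0$, which is the displayed condition above.

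The third step gives the bound. The right-hand side $\partial_f A(f^*;v)$ is strictly negative by \cref{lem:A}, and $\nu_0e^{-\alpha f^*}>0$. The displayed condition therefore forces $1-\alpha f^*<0$, that is, $f^*(v)>1/\alpha$. Economically, the LP deliberately raises the fee past the revenue-maximising level $1/\alpha$. At that point the marginal loss of uninformed income is offset by the marginal reduction in adverse selection.

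The main obstacle is the sign bookkeeping in the first-order condition, which I settled at the outset. A secondary issue is the role of the cap $\fmax$. If $\fmax$ binds, the maximiser could sit at the upper endpoint rather than be interior. I will state that the interior characterisation refers to $m$ on $[0,\infty)$, as the limit $f\to\infty$ in the statement implicitly does. Equivalently, it applies whenever $f^*(v)<\fmax$.
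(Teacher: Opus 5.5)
Your proof is correct and takes the same route as the paper: boundary values, attainment of the maximum in the interior, the first-order condition, and the sign argument. You fill in two steps the paper only asserts, namely the exponential-versus-$1/f$ comparison that gives $m\to0^-$ and the compactness argument for attainment. Your sign correction is also right: stationarity reads $\nu_0e^{-\alpha f}(\alpha f-1)=-\partial_f A(f;v)>0$, which is exactly the form $\Phi=\Psi$ the paper itself uses in the proof of \cref{prop:procyclical}, and only in this form does the paper's step ``positivity of the right-hand side forces $1-\alpha f<0$'' go through.
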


The boundary behaviour is immediate from \cref{lem:A} and \cref{eq:nu}; the
right-hand side of \cref{eq:foc} is positive, forcing $1-\alpha f<0$ at any
interior critical point. The condition $\sup_f m>0$ is the LP's profitability
condition: there exists a fee at which uninformed income exceeds residual adverse
selection. The proof and a sufficient condition for uniqueness (log-concavity of
$f\mapsto f\nu(f)$ together with convexity of $A$) are in \cref{app:proofs}.

\subsection{Constant volatility: optimality of a static fee}

\begin{corollary}[Static optimum]
\label{cor:static}
If $v_t\equiv v$ is constant, then the optimal fee is the time-invariant constant
$f^*(v)$ characterised by \cref{eq:foc}. No dynamic adjustment improves on the
best static fee.
\end{corollary}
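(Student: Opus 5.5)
The plan is a pathwise domination argument that specialises \cref{thm:pointwise} and \cref{prop:invariance} to a degenerate volatility process; no stochastic analysis is needed. Fix $v\in(0,8\lambda)$, which is admissible under \cref{ass:admissible} since $\{v\}$ is a compact subset of $(0,8\lambda)$.

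\emph{Step 1: existence of the maximiser.} By \cref{ass:admissible}, $m(\cdot;v)$ is continuous on the compact set $[0,\fmax]$, so a maximiser exists. When $\sup_f m>0$ and $\alpha>0$, \cref{prop:foc} places it in the interior and shows it solves \cref{eq:foc}. Uniqueness follows from the sufficient condition recorded after \cref{prop:foc}: log-concavity of $f\nu(f)$ together with strict convexity of $A$ from \cref{lem:A}. I will state the corollary under the same uniqueness hypothesis as \cref{thm:pointwise}; otherwise any measurable selection from the argmax works.

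\emph{Step 2: the pathwise bound.} With $v_t\equiv v$, \cref{eq:Rdyn} gives, for any admissible (possibly path-dependent) control $f$,
\begin{equation*}
\log R_T=\int_0^T m(f_s;v)\dd s\le T\,m\bigl(f^*(v);v\bigr)\quad\text{pathwise},
\end{equation*}
with equality if and only if $f_s=f^*(v)$ for almost every $s$. Here strictness uses uniqueness of the maximiser.

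\emph{Step 3: translate to both objectives.} Since $W^{\mathrm{reb}}$ does not depend on the fee, the decomposition $\log W^{\mathrm{LP}}_T=\log W^{\mathrm{reb}}_T+\log R_T$ turns the pathwise bound into:
\begin{itemize}
\item for the ergodic objective \cref{eq:ergodic}, $\Lambda=\Lambda^{\mathrm{reb}}+m(f^*(v);v)$, attained by the constant fee;
\item for the finite-horizon CRRA objective \cref{eq:finite}, almost-sure dominance of terminal wealth $W^{\mathrm{LP}}_T$ under the constant fee, hence dominance in expected utility for every increasing $U$.
\end{itemize}
This second point is slightly stronger than \cref{prop:invariance}, because it holds over all progressively measurable controls and not only over feedback controls $f(t,v_t)$. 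It holds because the state is deterministic, so there is nothing to condition on.

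\emph{Step 4: the conclusion.} The best static fee is therefore globally optimal: no dynamic policy does better, and any policy that is strictly better on a non-null set of times would contradict the bound in Step 2.

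The only delicate point is interpretive rather than technical. "Time-invariant" must be read as "equal to $f^*(v)$ for almost every $t$", since modifications on null time sets do not change the objective. Integrability is automatic because $m$ is bounded on $[0,\fmax]\times\{v\}$.
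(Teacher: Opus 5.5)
Your proposal is correct and is essentially the paper's argument: the paper gets \cref{cor:static} in one line from \cref{thm:pointwise} with the degenerate stationary law $\pi=\delta_v$, and your Step~2 is that theorem's pathwise bound $\int_0^T m(f_s;v)\dd s\le T\,\bar m(v)$ written out directly. Your finite-horizon extension to all progressively measurable controls is also valid, though it rests only on the pathwise inequality $m(f_s;v_s)\le\bar m(v_s)$, not on $v$ being deterministic.

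One caution, which does not affect your conclusion since you fall back on assuming uniqueness or on a measurable selection: the log-concavity/convexity condition you cite does not give a unique root of \cref{eq:foc} on $(1/\alpha,\infty)$. Because $A(f;v)$ decays only like $1/f$ while $f\nu(f)$ decays exponentially, $m(\cdot;v)$ turns back up toward $0^-$ beyond its peak, so the first-order condition also has a local-minimum root. In addition, when the cap binds the maximiser is $\fmax$, which need not satisfy \cref{eq:foc} at all.
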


\Cref{cor:static} follows from \cref{thm:pointwise} since $m(\cdot;v)$ is then
time-homogeneous. It supplies a formal rationale for fixed fee tiers in
stationary markets and localises the entire value of dynamic fees in the time
variation of $v_t$.

\subsection{Pro-cyclicality}

The next result is the paper's central comparative static.

\begin{proposition}[Pro-cyclical optimal fee]
\label{prop:procyclical}
Let $\alpha>0$ and suppose the interior optimum satisfies $f^*(v)\in(1/\alpha,2/\alpha)$.
Then $f^*$ is strictly increasing in $v$:
\begin{equation}
\frac{\dd f^*}{\dd v} \;>\;0 .
\end{equation}
The optimal fee schedule is therefore pro-cyclical: the LP optimally raises the
fee when volatility rises.
\end{proposition}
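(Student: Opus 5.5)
We apply the implicit function theorem to the first-order condition \cref{eq:foc}. Write it as $G(f,v)=0$ with
\[
G(f,v) \;:=\; \partial_f m(f;v) \;=\; \nu_0 e^{-\alpha f}(1-\alpha f) + \partial_f A(f;v).
\]
By \cref{prop:foc}, $f^*(v)$ is interior and solves $G(f^*(v),v)=0$. Both $G$ and its partials are smooth on $(0,\fmax)\times(0,8\lambda)$. If $\partial_f G(f^*(v),v)<0$, the implicit function theorem yields a $C^1$ local branch, and
\[
\frac{\dd f^*}{\dd v} = -\frac{\partial_v G}{\partial_f G}.
\]
The sign of $\dd f^*/\dd v$ is therefore the sign of $\partial_v G(f^*,v)=\partial_v\partial_f A(f^*;v)$, provided the denominator is negative.

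\textbf{Numerator.} The uninformed-income term does not depend on $v$, so $\partial_v G=\partial_v\partial_f A$. By \cref{lem:A}, $-\partial_f A$ is strictly increasing in $v$, so $\partial_v\partial_f A<0$. A direct check confirms this. With $c=\sqrt{2\lambda}$ we have
\[
-\partial_f A=\frac{c\,v^{1/2}}{8(1+c f v^{-1/2})^2}.
\]
Both factors increase in $v$: the numerator $v^{1/2}$ rises and the denominator's $v^{-1/2}$ term falls. The derivative is strictly positive, and we will record it explicitly to ensure strictness.

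\textbf{Denominator.} We need $\partial_f G=\partial_{ff}m<0$ at $f^*$. We split it as
\[
\partial_{ff}m=\partial_{ff}\bigl(f\nu(f)\bigr)-\partial_{ff}A,
\qquad
\partial_{ff}\bigl(f\nu(f)\bigr)=\nu_0\alpha e^{-\alpha f}(\alpha f-2).
\]
The first piece is strictly negative exactly when $f<2/\alpha$, which is the hypothesis $f^*\in(1/\alpha,2/\alpha)$. The second piece satisfies $-\partial_{ff}A<0$ by strict convexity of $A$ in \cref{lem:A}. Hence $\partial_{ff}m(f^*;v)<0$ strictly. This is where the window assumption is used. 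The lower bound $1/\alpha$ comes automatically from \cref{prop:foc}, and the upper bound makes the income term concave. Combining the two signs gives $\dd f^*/\dd v=-(\text{negative})/(\text{negative})$ with the numerator's minus sign absorbed, so $\dd f^*/\dd v>0$.

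\textbf{Main obstacle.} The implicit function theorem gives the derivative only along a local smooth branch, whereas $f^*$ is defined globally as an argmax. We handle this in three steps.
\begin{enumerate}
\item Nondegeneracy ($\partial_{ff}m<0$) at the maximiser makes it an isolated strict local maximum.
\item Under the uniqueness assumption in \cref{thm:pointwise}, a compactness and continuity argument (Berge's maximum theorem on $[0,\fmax]$) shows that $v\mapsto f^*(v)$ is continuous.
\item A continuous selection that solves $G=0$ must coincide locally with the implicit-function branch.
\end{enumerate}
The derivative formula therefore holds at every $v$ where the window hypothesis is satisfied. If uniqueness were not assumed, we would fall back on a monotone comparative statics argument. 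Strictly increasing differences of $m$ in $(f,v)$, which is exactly $\partial_v\partial_f m>0$, give via Topkis's theorem that every selection from the argmax is nondecreasing. Strictness then follows from the implicit-function computation.
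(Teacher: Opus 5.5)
Your strategy matches the paper's: the implicit function theorem on the first-order condition, with the cross-partial signed by \cref{lem:A} and the own-partial signed by the window $f^*<2/\alpha$ together with convexity of $A$. However, the sign bookkeeping, which is the whole content of the proposition, is wrong as written. From $m=f\nu(f)-A(f;v)$ one gets $\partial_f m=\nu_0e^{-\alpha f}(1-\alpha f)-\partial_f A$, not $+\partial_f A$. Your $G$ is what \cref{eq:foc} yields as printed, but that display has its left-hand side sign-flipped relative to $m$. The text following it concludes $1-\alpha f<0$, and the paper's proof of this proposition uses $\nu_0e^{-\alpha f}(\alpha f-1)=-\partial_f A$. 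With your $G$ you get $\partial_v G=\partial_v\partial_f A<0$. By your own reduction (when $\partial_f G<0$, the sign of $\dd f^*/\dd v$ is that of $\partial_v G$), this gives a \emph{decreasing} fee. Indeed $-(\text{negative})/(\text{negative})$ is negative, and ``the numerator's minus sign absorbed'' does not change that. Moreover, your denominator $\partial_{ff}m$ is not $\partial_f G$ for the $G$ you displayed; that would be $(f\nu)''+\partial_{ff}A$, which has no definite sign. Your Topkis remark correctly states $\partial_v\partial_f m>0$, which contradicts your numerator paragraph.

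The repair is a single sign. Take $G=\partial_f m=\nu_0e^{-\alpha f}(1-\alpha f)-\partial_f A$. Then $\partial_v G=\partial_v(-\partial_f A)>0$, as your explicit formula $-\partial_f A=\sqrt{2\lambda}\,v^{1/2}/\bigl(8(1+\sqrt{2\lambda}\,f\,v^{-1/2})^2\bigr)$ shows. Also $\partial_f G=\partial_{ff}m<0$, exactly as you computed. Hence $\dd f^*/\dd v=-(\text{positive})/(\text{negative})>0$. The paper works with $G=\Phi-\Psi=-\partial_f m$, where $\Phi(f)=\nu_0e^{-\alpha f}(\alpha f-1)$ and $\Psi=-\partial_f A$. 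Both of its partials therefore carry the opposite signs, and the quotient is the same. Once corrected, your argument coincides with the paper's, with two modest improvements. First, you treat the exact rate \cref{eq:Aexact} directly rather than passing through the fast-block form. Second, your Berge-theorem step identifying the global argmax with the local implicit-function branch makes explicit something the paper takes for granted.
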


\begin{proof}[Proof sketch]
Work with the fast-block form \cref{eq:Afast}, under which \cref{eq:foc} reads
$\Phi(f)=\Psi(f,v)$ with
$\Phi(f)=\nu_0 e^{-\alpha f}(\alpha f-1)$ and $\Psi(f,v)=b\,v^{3/2}/f^2$. On
$(1/\alpha,2/\alpha)$ one has $\Phi(f)>0$ and
$\Phi'(f)=\nu_0\alpha e^{-\alpha f}(2-\alpha f)>0$, while $\Psi$ is positive,
strictly decreasing in $f$, and strictly increasing in $v$. Writing
$G(f,v)=\Phi(f)-\Psi(f,v)$, the second-order condition gives
$G_f=\Phi'-\Psi_f>0$, and $G_v=-\Psi_v<0$. By the implicit function theorem,
$\dd f^*/\dd v=-G_v/G_f>0$. The argument for the exact rate \cref{eq:Aexact} is
identical in structure, using $-\partial_f A$ increasing in $v$
(\cref{lem:A}); see \cref{app:proofs}.
\end{proof}

\begin{remark}
Pro-cyclicality reflects two reinforcing effects of higher variance: the level of
adverse selection rises, and its sensitivity to the fee, $-\partial_f A$, rises as
well, so the marginal value of widening the band increases. The optimal response
is to raise the fee, sacrificing some uninformed volume to suppress arbitrage.
This is the opposite of the conclusion one obtains from a model that mistakenly
treats adverse selection as increasing in the fee, and it accords with the
volatility-linked fee rules used by practitioners.
\end{remark}

\Cref{fig:fee} displays the optimal schedule $f^*(v)$ for the calibration of
\cref{sec:empirics} and the objective $m(\cdot;v)$ at several volatility levels.
The schedule rises monotonically with volatility until it reaches the protocol
cap, and the peaks of $m(\cdot;v)$ shift to the right as $v$ increases.

\begin{figure}[t]
\centering
\begin{minipage}{0.49\textwidth}
\centering
\begin{tikzpicture}
\begin{axis}[width=\textwidth,height=5.4cm,
  xlabel={reference volatility $\sqrt{v}$ (\% per year)},
  ylabel={optimal fee $f^*$ (bps)},
  xmin=0,ymin=0,grid=both,grid style={gray!18},
  tick label style={font=\footnotesize},
  label style={font=\footnotesize},
  legend style={font=\scriptsize,at={(0.98,0.04)},anchor=south east,draw=none,fill=none}]
\addplot[navy,very thick] coordinates {\DATAfeecurve};
\addlegendentry{optimal fee $f^*(v)$}
\addplot[gray,dashed,thick,domain=0:140] {\DATAlaffer};
\addlegendentry{noise peak $1/\alpha$}
\end{axis}
\end{tikzpicture}
\end{minipage}\hfill
\begin{minipage}{0.49\textwidth}
\centering
\begin{tikzpicture}
\begin{axis}[width=\textwidth,height=5.4cm,
  xlabel={fee $f$ (bps)},
  ylabel={$m(f;v)$ (\% per year)},
  grid=both,grid style={gray!18},
  tick label style={font=\footnotesize},
  label style={font=\footnotesize},
  legend style={font=\scriptsize,at={(0.5,0.02)},anchor=south,draw=none,fill=none,legend columns=2}]
\addplot[lightblue,very thick] coordinates {\DATAobjA}; \addlegendentry{$\sqrt v{=}30\%$}
\addplot[midblue,very thick] coordinates {\DATAobjB};   \addlegendentry{$\sqrt v{=}50\%$}
\addplot[teal,very thick] coordinates {\DATAobjC};      \addlegendentry{$\sqrt v{=}70\%$}
\addplot[navy,very thick] coordinates {\DATAobjD};      \addlegendentry{$\sqrt v{=}90\%$}
\addplot[only marks,mark=*,mark size=1.6pt,lightblue] coordinates {\DATAobjAopt};
\addplot[only marks,mark=*,mark size=1.6pt,midblue] coordinates {\DATAobjBopt};
\addplot[only marks,mark=*,mark size=1.6pt,teal] coordinates {\DATAobjCopt};
\addplot[only marks,mark=*,mark size=1.6pt,navy] coordinates {\DATAobjDopt};
\end{axis}
\end{tikzpicture}
\end{minipage}
\caption{Left: the optimal fee $f^*(v)$ is strictly increasing in volatility
until it reaches the cap (pro-cyclicality, \cref{prop:procyclical}). Right: the
excess growth rate $m(\cdot;v)$ for four volatility levels, with the optimum
marked; higher volatility shifts the optimal fee to the right.}
\label{fig:fee}
\end{figure}

\subsection{Asymptotics and the value of dynamic fees}

In the fast-block regime \cref{eq:Afast}, the adverse-selection rate scales as
$A(f;v)\sim v^{3/2}/(8\sqrt{2\lambda}\,f)$, recovering the scaling of
\citet{milionis2024fees}: arbitrage losses grow with the cube of volatility and
the inverse fee and shrink with the square root of the block rate. Two
implications follow. First, faster chains (larger $\lambda$) and lower volatility
both reduce adverse selection and hence the optimal fee. Second, the value of
dynamic adjustment, namely the gap between $\int \bar m\,\dd\pi$ and
$\max_f\int m(f;\cdot)\dd\pi$ achievable by the best static fee, is governed by the
curvature of $f^*(\cdot)$ over the support of $\pi$ and therefore grows with the
dispersion of volatility and falls with the fee elasticity $\alpha$. We quantify
this gap in \cref{sec:empirics}.

\section{Stochastic volatility: HJB, Poisson equation, and numerics}
\label{sec:numerics}

When volatility follows \cref{eq:cir}, the optimal fee is the pointwise feedback
$f^*(v)$ of \cref{eq:foc}, and the growth rate and value function are obtained
from \cref{eq:hjb,eq:poisson}. Since the maximisation over $f$ is pointwise, the
only genuinely infinite-dimensional object is the relative value function
$\psi(v)$, which solves a one-dimensional linear ordinary differential equation.

\subsection{Pointwise optimisation}

For each $v$ on a grid we compute $f^*(v)$ by solving \cref{eq:foc}. When a
closed-form inverse is unavailable, $m(\cdot;v)$ is unimodal under the conditions
of \cref{prop:foc} and $f^*(v)$ is found by a derivative-free bracketing method
(Brent's algorithm). The maximised reward $\bar m(v)=m(f^*(v);v)$ is then a known
function.

\subsection{Finite-difference solution of the Poisson equation}

We discretise $v\in[v_{\min},v_{\max}]$ on a grid that concentrates nodes near
$\theta$ and approximate $\Lop_v$ by central differences, with an upwind
correction in the drift to preserve an M-matrix structure. The discretised
\cref{eq:poisson} is the linear system
\begin{equation}
\label{eq:linsys}
\mathbf{L}\bm{\psi} \;=\; \Lambda\bm{1} + \tfrac12\mathbf{v} - \bar{\mathbf m},
\end{equation}
with $\mathbf{L}$ tridiagonal. The constant $\Lambda$ is the second unknown; we
remove the additive indeterminacy of $\psi$ by fixing $\psi(v_{\mathrm{ref}})=0$
and append the corresponding row, yielding the nonsingular augmented system
\begin{equation}
\label{eq:augmented}
\begin{pmatrix}\mathbf{L} & -\bm{1}\\ \mathbf{e}_{\mathrm{ref}}^\top & 0\end{pmatrix}
\begin{pmatrix}\bm{\psi}\\ \Lambda\end{pmatrix}
=\begin{pmatrix}\tfrac12\mathbf{v}-\bar{\mathbf m}\\ 0\end{pmatrix}.
\end{equation}
At $v_{\min}$ we impose a reflecting condition $\psi'(v_{\min})=0$; at $v_{\max}$
we impose linear growth, $\psi''(v_{\max})=0$, consistent with the CIR process
spending negligible time there. The system \cref{eq:augmented} is solved by a
sparse direct method and returns both the optimal growth premium $\Lambda$ and the
relative value function $\psi$. Convergence is first verified against the
stationary-average formula \cref{eq:Lambda}, evaluated with the CIR stationary
Gamma law (\cref{app:cir}); the two agree to within the discretisation error.

\subsection{High-dimensional extensions}

When additional state variables are present (for example a stochastic jump
intensity, or the wealth dimension under non-homogeneous frictions), the Poisson
equation becomes a higher-dimensional partial differential equation. In that
regime we represent $\psi$ by a neural network and minimise the squared residual
of \cref{eq:poisson} along simulated trajectories, in the manner of the Deep
Galerkin Method \citep{sirignano2018dgm}; the standard finite-difference and
policy-iteration guarantees of \citet{kushner2001numerical} apply to the
one-dimensional case. A detailed high-dimensional implementation is left to
future work, as it is not needed for the volatility-only problem that is our focus.

\section{Extensions and robustness}
\label{sec:extensions}

\subsection{Price jumps}
\label{sec:jumps}

Let the reference price carry jumps,
$\dd S_t = \sqrt{v_t}\,S_{t-}\dd Z_t + S_{t-}\!\int_{\R}(e^y-1)\widetilde N(\dd t,\dd y)$,
with $\widetilde N$ a compensated Poisson random measure of intensity $\lambda_J$
and log-jump law $\phi$. Across a jump of size $y$, both the LP holdings and the
rebalancing benchmark are repriced at the post-jump price; to leading order both
scale by $e^{y/2}$, since each is locally proportional to $\sqrt S$.

\begin{proposition}[Jump invariance]
\label{prop:jumps}
Under logarithmic utility, a jump of size $y$ changes $\log W^{\mathrm{LP}}$ and
$\log W^{\mathrm{reb}}$ by the same amount to leading order, so $\log R_t$ is
unaffected by jumps at that order. Consequently the optimal fee $f^*(v)$ of
\cref{eq:foc} is invariant to the jump intensity and jump-size distribution; jumps
shift the LP and benchmark growth rates by the common constant
$\lambda_J\,\E_\phi[e^{y/2}-1]$, leaving the optimal policy and the relative
ranking of any two fee policies unchanged.
\end{proposition}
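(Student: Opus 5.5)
The plan is to reduce the jump case to the diffusive decomposition of \cref{sec:reldyn} by checking how each of $W^{\mathrm{LP}}$ and $W^{\mathrm{reb}}$ reacts to a jump of size $y$.

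\emph{Step 1: the benchmark.} In the continuous-arbitrage limit the benchmark is locally the constant-product value $2\sqrt{kS}$. A jump $S_{t-}\mapsto S_{t-}e^{y}$ multiplies it by $e^{y/2}$, so $\Delta\log W^{\mathrm{reb}}_t=y/2$.

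\emph{Step 2: the pool.} Just before the jump, $z_{t-}\in[-f,f]$. By \cref{eq:Wlp}, $W^{\mathrm{LP}}=\sqrt{kS}\,(e^{-z/2}+e^{z/2})$. The jump shifts $z$ to $z_{t-}-y$ while $P$ is unchanged. Since $\cosh$ is even and smooth,
\[
\log W^{\mathrm{LP}}_t-\log W^{\mathrm{LP}}_{t-}=\tfrac{y}{2}+\log\cosh\bigl(\tfrac{z_{t-}-y}{2}\bigr)-\log\cosh\bigl(\tfrac{z_{t-}}{2}\bigr).
\]
To leading order in the band width, with $z=O(f)$, the leading term is $y/2$. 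The correction is second order: it is $O(y^2)$ for small jumps. For large jumps it is precisely the post-jump arbitrage rebalancing loss, which the next arbitrage block books into $\log R$. I would treat that loss as absorbed into the fee-independent jump contribution at the stated order, or bounded as $O(f)$ relative to the $y/2$ term.

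\emph{Step 3: growth rates and invariance.} Subtracting Steps 1 and 2, $\Delta\log R_t=0$ at leading order. Hence between jumps $\log R$ still obeys \cref{eq:Rdyn}, and
\[
\log W^{\mathrm{LP}}_T=\log W^{\mathrm{reb}}_T+\int_0^T m(f_s;v_s)\dd s .
\]
Here the jumps enter only $\log W^{\mathrm{reb}}$, which is fee-independent. The common shift in the growth rate comes from compensating the jump measure. Wealth scales by $e^{y/2}$ per jump, so the compensator of the wealth process contributes $\lambda_J\E_\phi[e^{y/2}-1]$, identically for the LP and the benchmark.

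The pointwise argument in the proof of \cref{thm:pointwise} then goes through unchanged. The bound $\frac1T\log R_T\le\frac1T\int_0^T\bar m(v_s)\dd s$ holds pathwise, and the feedback $f^*(v_s)$ attains it. The maximiser of $\E[\log W^{\mathrm{LP}}_T]$ is therefore $f^*(v)$ from \cref{eq:foc}, independent of $\lambda_J$ and $\phi$. Any two fee policies differ in growth rate only through $\int m\,\dd\pi$, so their ranking is preserved.

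\emph{Main obstacle.} The hard step is making ``leading order'' precise in Step 2. The $\log\cosh$ term is not negligible for large $|y|$, and it depends on $z_{t-}$, whose law depends on the fee through the band. My first approach is to expand in $f$ with $y$ fixed: write $\log\cosh\bigl(\tfrac{z-y}{2}\bigr)=\log\cosh\bigl(\tfrac{y}{2}\bigr)+O(f)$. The fee-independent part $\lambda_J\E_\phi[\log\cosh(y/2)]$ is then a constant that also leaves the argmax unchanged, and the $O(f)$ remainder is dropped at the order of \cref{eq:Aexact}. If that remainder cannot be controlled uniformly, I would state the result under the assumption $\E_\phi[y^2]<\infty$ with small jumps, which is the approximation regime the statement invokes.
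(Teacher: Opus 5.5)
Your Steps 1 and 3 are the paper's argument. Its proof asserts that both $W^{\mathrm{LP}}$ and $W^{\mathrm{reb}}$ are multiplied by $e^{y/2}$ to leading order, lets the common compensator cancel in $\log R$, and concludes that $m$, and hence $f^*$, is unchanged. Your Step 2 goes further by writing out the correction, and that is where the argument runs into trouble.

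First, Steps 1 and 2 mark the two portfolios inconsistently: the pool with its frozen position, the benchmark as the function $2\sqrt{kS}$. That function is not self-financing. In the diffusive case $W^{\mathrm{reb}}-2\sqrt{kS}$ is exactly the accumulated LVR, which is why \cref{eq:Rdyn} has the drift $-A$. The benchmark for which \cref{eq:Rdyn} holds carries the pool's risky position through the jump. So at $z_{t-}=0$, with equal pre-jump values, both wealths change by $\log\frac{1+e^{y}}{2}=\frac y2+\log\cosh\frac y2$. The jump instant therefore cancels. Your $\log\cosh$ term is a mark-to-market \emph{gain} on the stale position, not a loss. The entire jump effect on $\log R$ comes from the arbitrage at the next block, which costs $\log\cosh(y/2)$ at zero fee, i.e.\ a discrete LVR.

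Second, and decisively, that post-jump loss depends on $f$ at first order. So it cannot be ``absorbed into the fee-independent jump contribution'' or dropped as an $O(f)$ remainder. The arbitrageur resets $z$ only to the nearest band edge and pays the fee on a trade of size of order $|y|$. For $|y|\gg f$ and $z_{t-}=0$ the fee collected is about $\tfrac f2(1-e^{-|y|/2})$ per unit pool value. The jump channel thus adds to the drift of $\log R$ a term with $f$-linear part $\lambda_J f\,\E_\phi[c(y)]$, where $c(y)>0$ is of order $|y|$. Its $f$-derivative enters \cref{eq:foc} as an extra marginal benefit of the fee, and nothing in the model makes it small next to $\nu_0e^{-\alpha f}(1-\alpha f)$ or $-\partial_fA$.

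Comparing this term with $y/2$ is the wrong yardstick. It is also unlike the $(1+O(f))$ factors dropped in \cref{eq:Aexact}, which are relative corrections to the already small $A$. An additive $\lambda_J\,O(f)$ term sits on the scale of the fee income $f\nu(f)$ that you keep, so dropping it discards exactly what moves the argmax.

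The small-jump fallback does not help either. The zero-fee loss rate $\lambda_J\E_\phi[\log\cosh(y/2)]\approx\lambda_J\E_\phi[y^2]/8$ is the analogue of $v/8$ and is likewise reduced by the band. Small jumps therefore act on the band like additional variance, which by \cref{prop:procyclical} raises $f^*$.

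What your argument establishes is the first sentence of the proposition: cancellation at first order in $y$. The paper's proof establishes no more, since it simply asserts the common $e^{y/2}$ scaling without examining this term. The invariance of $f^*$ to $\lambda_J$ and $\phi$ does not follow from that cancellation unless jump-triggered arbitrage is excluded from the model.
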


\Cref{prop:jumps}, proved in \cref{app:proofs}, is a direct consequence of
measuring wealth against the rebalancing benchmark: jumps are a common factor that
cancels in $R_t$. It corrects the intuition that tail risk should reshape the fee:
under log preferences it does not, because the fee acts on the band width, not on
the LP's directional exposure.

\subsection{Competition among pools}
\label{sec:competition}

A full equilibrium treatment of venues that compete through dynamic fees is given
by \citet{baggiani2026competition}, who characterise an approximate Nash
equilibrium via a coupled system of partial differential equations and show that
the monopoly two-regime fee structure persists under competition, with the
relevant switching boundary moving from the oracle price to a weighted average of
the oracle price and the competitors' exchange rates. We do not reproduce that
analysis. Instead we record a deliberately simplified, reduced-form observation
within our framework, to indicate how the volatility feedback of
\cref{sec:structure} interacts with competition.

Suppose $M$ identical pools quote the same pair. Uninformed flow is split across
pools by an attractiveness rule, while arbitrageurs hit each pool according to its
own band. With a logit split, pool $i$ capturing a share
$e^{-\eta_c f_i}/\sum_j e^{-\eta_c f_j}$ of an aggregate uninformed turnover
$D(\bar f)\,\nu_0$ that itself responds to the market-average fee $\bar f$ through
a decreasing demand factor $D$, the per-pool excess growth rate is
\begin{equation}
\label{eq:comp}
m_i(f_i,\mathbf{f}_{-i};v) \;=\;
f_i\,\nu_0\,D(\bar f)\,\frac{e^{-\eta_c f_i}}{\sum_j e^{-\eta_c f_j}} - A(f_i;v).
\end{equation}
A symmetric interior equilibrium $f^{\mathrm{eq}}(v)$ satisfies the first-order
condition obtained by differentiating \cref{eq:comp} in $f_i$ at $f_i=\bar f=f$,
which retains the adverse-selection term $-\partial_f A$ in full and scales the
marginal uninformed revenue by the competitive and demand elasticities. The
aggregate demand response $D$ restores strict concavity in $f_i$ and rules out the
degenerate bang-bang equilibrium that arises when total uninformed volume is
treated as fixed. Two features survive in this stylised setting: competition
compresses the level of the equilibrium fee toward marginal adverse-selection cost
as $M$ grows, and the volatility dependence of the fee is preserved in shape
because the adverse-selection term is exogenous to competition. The latter is
consistent with the equilibrium analysis of \citet{baggiani2026competition}, in
which the volatility-driven arbitrage-deterrence motive likewise survives
competition. A formal statement and proof of this reduced-form claim are in
\cref{app:proofs}.

\subsection{Gas costs and discrete updates}
\label{sec:gas}

Each fee update is an on-chain transaction with a fixed gas cost $c>0$. The
problem becomes an impulse-control problem: the LP chooses update times and new
fee levels to maximise the ergodic reward net of update costs. For a pool whose
value is large relative to $c$, the optimal rule is a dead-band around the
frictionless target $f^*(v)$: the fee is reset to $f^*(v)$ only when the gap
$|f_{\text{current}}-f^*(v)|$ exceeds a threshold calibrated to $c$ and the
volatility of $v_t$. As $c\to0$ the band collapses to the continuous policy. We
quantify the resulting trade-off between update frequency and forgone growth in
\cref{sec:empirics}.

\section{Calibration and simulation study}
\label{sec:empirics}

\subsection{Calibration}

We calibrate to liquid large-capitalisation conditions in annualised units. The
block rate is $\lambda = 2.63\times10^{6}\,\mathrm{yr}^{-1}$, corresponding to the
twelve-second slot time of a proof-of-stake chain. The variance follows
\cref{eq:cir} with $\kappa=3.0$, $\theta=0.36$ (a long-run volatility of
$60\%$ per year, consistent with realised volatility for major crypto pairs),
and $\xi=1.30$, which satisfies the Feller condition
$2\kappa\theta=2.16>\xi^2=1.69$ and induces a dispersed stationary Gamma law with
shape $2\kappa\theta/\xi^2\approx1.28$ and scale $\xi^2/(2\kappa)\approx0.28$.
Uninformed flow has base turnover $\nu_0=8$ and semielasticity $\alpha=400$ (so
the uninformed peak $1/\alpha$ is $25$ bps), and the fee cap is $\fmax=100$ bps.
These choices are illustrative; the qualitative conclusions depend only on the
sign properties of \cref{lem:A} and on $\alpha>0$.

At the long-run volatility, the frictionless LVR rate is $\theta/8=4.5\%$ per
year, of which the optimal fee eliminates $93.1\%$. The optimal schedule, shown in
\cref{fig:fee}, rises from $26.7$ bps at $\sqrt v=30\%$ to $35.5$ bps at
$\sqrt v=60\%$ and to $47.4$ bps at $\sqrt v=80\%$, reaching the cap near
$\sqrt v=100\%$. \Cref{fig:stationary} overlays the schedule on the stationary
density of volatility.

\begin{figure}[t]
\centering
\begin{tikzpicture}
\begin{axis}[width=0.62\textwidth,height=5.6cm,
  xlabel={reference volatility $\sqrt{v}$ (\% per year)},
  ylabel={scaled stationary density},
  axis y line*=left,xmin=0,xmax=140,ymin=0,
  tick label style={font=\footnotesize},label style={font=\footnotesize}]
\addplot[draw=midblue,fill=lightblue,fill opacity=0.7] coordinates {\DATAdens} \closedcycle;
\end{axis}
\begin{axis}[width=0.62\textwidth,height=5.6cm,
  axis y line*=right,axis x line=none,xmin=0,xmax=140,ymin=0,
  ylabel={optimal fee $f^*$ (bps)},
  tick label style={font=\footnotesize},label style={font=\footnotesize}]
\addplot[navy,very thick] coordinates {\DATAfeeoverlay};
\end{axis}
\end{tikzpicture}
\caption{Stationary density of reference volatility under the calibrated CIR
process (shaded, left axis) and the optimal fee schedule $f^*$ (solid, right
axis). The value of dynamic fees comes from the dispersion of the stationary law:
a single static fee is optimal only at one point of the support.}
\label{fig:stationary}
\end{figure}

\subsection{Backtesting design}

We simulate the variance process over a horizon of $1.5$ years using a
full-truncation Euler scheme and evaluate five fee policies on a single common set
of $6000$ variance paths (common random numbers), which makes the comparison
across policies exactly paired:
\begin{enumerate}\itemsep1pt
\item a static $5$ bps fee;
\item a static $30$ bps fee;
\item the best static fee, optimised in sample (which the tuning selects at
$37$ bps);
\item a volatility-linked heuristic $f_t=\mathrm{clip}(a+b\sqrt{v_t},0,\fmax)$
with $(a,b)$ optimised in sample;
\item the optimal dynamic fee $f^*(v_t)$.
\end{enumerate}
For each path we record the annualised excess growth rate over the rebalancing
benchmark, $T^{-1}\int_0^T m(f_t;v_t)\dd t$. Because the policies are evaluated on
identical paths, the difference between the optimal policy and any benchmark is a
paired quantity, and by \cref{thm:pointwise} it is nonnegative on every path.

\subsection{Results}

\Cref{tab:backtest} reports the out-of-sample performance. The optimal dynamic fee
attains an excess growth rate of $37.1$ bps per year over the rebalancing
benchmark and weakly dominates every alternative on every path. The paired
improvement over the best static fee is $5.7$ bps per year (a $95\%$ paired band
of $[1.5,25.7]$ bps), over the volatility-linked heuristic $1.2$ bps, and over the
static $30$ bps fee $8.3$ bps; the static $5$ bps fee is dominated by a wide
margin because it leaves almost all LVR unhedged. The win rate of the optimal
policy is $100\%$ against each benchmark by construction. \Cref{fig:paths} shows
the mean relative-wealth paths.

\begin{table}[t]
\centering
\caption{Out-of-sample performance over $1.5$ years on $6000$ common variance
paths. Excess growth is annualised, in basis points, measured against the
rebalancing benchmark. ``Paired gain'' is the per-path difference of the optimal
policy over the row policy (mean and $95\%$ band); it is nonnegative on every
path by \cref{thm:pointwise}.}
\label{tab:backtest}
\small
\setlength{\tabcolsep}{5pt}
\begin{tabular}{lrrrr}
\toprule
Policy & Excess growth & Dispersion & Paired gain & Max.\ drawdown\\
       & (bps)         & (bps)      & (bps)        & (\%)\\
\midrule
Static $5$ bps        & $-139.9$ & $51.5$ & $177.0\;[76.4,413.9]$ & $-2.08$\\
Static $30$ bps       & $28.8$   & $13.8$ & $8.3\;[0.6,41.8]$     & $-0.15$\\
Best static ($37$ bps)& $31.4$   & $11.6$ & $5.7\;[1.5,25.7]$     & $-0.11$\\
Heuristic dynamic     & $35.9$   & $10.8$ & $1.2\;[0.5,5.3]$      & $-0.07$\\
\textbf{Optimal dynamic} & $\mathbf{37.1}$ & $\mathbf{10.5}$ & $\text{--}$ & $\mathbf{-0.06}$\\
\bottomrule
\end{tabular}
\end{table}

\begin{figure}[t]
\centering
\begin{tikzpicture}
\begin{axis}[width=0.62\textwidth,height=5.4cm,
  xlabel={years},
  ylabel={mean $W^{\mathrm{LP}}_t/W^{\mathrm{reb}}_t$},
  grid=both,grid style={gray!18},xmin=0,xmax=1.5,
  tick label style={font=\footnotesize},label style={font=\footnotesize},
  legend style={font=\scriptsize,at={(0.03,0.97)},anchor=north west,draw=none,fill=none}]
\addplot[greyln,very thick] coordinates {\DATApathA}; \addlegendentry{Static 5 bps}
\addplot[salmon,very thick] coordinates {\DATApathB}; \addlegendentry{Static 30 bps}
\addplot[navy,very thick] coordinates {\DATApathC};   \addlegendentry{Optimal dynamic}
\end{axis}
\end{tikzpicture}
\caption{Mean relative wealth $W^{\mathrm{LP}}_t/W^{\mathrm{reb}}_t$ under three
policies on common variance paths. The static $5$ bps fee bleeds value to
arbitrage; the optimal dynamic fee tracks the best static fee in calm regimes and
separates from it during volatility spikes.}
\label{fig:paths}
\end{figure}

The pattern is intuitive. A well-chosen static fee already removes most adverse
selection, so the increment from dynamic adjustment is modest at the centre of the
volatility distribution; the gain accrues in the tails, where the static fee is
materially mispriced and the optimal schedule's curvature matters. The improvement
over the best static fee is small in absolute terms but uniformly positive, and it
widens with the dispersion of volatility and narrows with the fee elasticity, as
anticipated in \cref{sec:structure}.

\subsection{Robustness}

\paragraph{Jumps.}
Adding compensated price jumps shifts the growth rate of every policy by the
common constant of \cref{prop:jumps} and leaves the optimal schedule and the
policy ranking unchanged, confirming the invariance result numerically.

\paragraph{Gas costs and update frequency.}
With a per-update gas cost of nine dollars on a ten-million-dollar pool, a
continuously updated optimal fee changes about $498$ times per year and loses
roughly $4.5$ bps of annual growth to gas. A dead-band of $2$ bps cuts the number
of updates to about $95$ per year and recovers the policy to $36.2$ bps; a $5$ bps
dead-band reduces updates to about $30$ per year at $36.6$ bps. The continuous
target is thus an excellent approximation once a modest dead-band absorbs the gas
friction.

\paragraph{Volatility dispersion and elasticity.}
Re-running the calibration with a tighter stationary law (smaller $\xi$) shrinks
the dynamic gain toward zero, while a more dispersed law widens it, and lowering
$\alpha$ raises the level of the optimal fee and the dynamic gain. These
comparative statics match the asymptotic analysis of \cref{sec:structure} and
indicate that dynamic fees are most valuable for pairs with volatile,
fat-tailed volatility and relatively inelastic uninformed demand.

\section{Discussion and protocol implications}
\label{sec:discussion}

The analysis yields a parsimonious and deployable recipe. The optimal fee is a
deterministic, pro-cyclical function of a single observable, the instantaneous
variance of the reference price, and is independent of the pool's size and of the
LP's wealth and risk aversion. It can be implemented as a lookup table from a
filtered variance estimate to a fee, precomputed offline from \cref{eq:foc} with
estimated parameters and stored on-chain, with updates governed by the dead-band
of \cref{sec:gas}.

Three implementation points deserve emphasis. First, the fee should respond to an
external, manipulation-resistant volatility signal rather than to the pool's own
price, to avoid a feedback loop in which an adversary moves the pool price to
induce a favourable fee. A time-averaged oracle or an external volatility feed is
appropriate. Second, the parameters $(\nu_0,\alpha)$ of uninformed demand should
be re-estimated periodically from realised flow, acknowledging the circularity
that the fee policy itself shapes observed volume; a conservative governance
process that adjusts a small number of interpretable knobs (a base fee, a
volatility sensitivity, and a cap) is preferable to frequent automated
re-estimation. Third, because a well-chosen static fee captures most of the
available value, the case for dynamic fees is strongest precisely where volatility
is most variable, and protocols should target such pairs first.

The framework extends naturally. For concentrated-liquidity positions the
adverse-selection rate acquires a dependence on the position's range through the
local convexity of the holdings \citep{cartea2024predictable}, adding a state
variable but preserving the structure of the control problem. A further direction is to endogenise the liquidity base itself: because the fee
governs the providers' optimal exit, as in \citet{bergault2025exit}, the available
depth responds to the fee policy, closing a feedback loop between the fee and the
liquidity it acts on that our reduced form holds fixed. A strategic
treatment of arbitrageurs who anticipate the fee policy, and the design of
auction-managed fees \citep{adams2024amamm}, are complementary mechanisms that our
single-agent optimum can inform. A further direction is to embed the
volatility-feedback fee studied here inside the venue's problem: rather than a
single liquidity provider optimising its own fee, the platform chooses the reward
offered to its providers, as in the leader-follower formulation of
\citet{aqsha2026reward}, so that the optimal fee and the equilibrium provision of
liquidity are determined jointly.

\section{Conclusion}
\label{sec:conclusion}

We have formulated the dynamic fee problem of a constant-product AMM as a
stochastic control problem in which the fee governs the trade-off between
uninformed revenue and adverse selection, with the adverse-selection rate tied to
the no-arbitrage-band mechanism of \citet{milionis2024fees}. Measuring wealth
against the rebalancing benchmark removes the fee from the diffusion of wealth and
reduces the problem to an ergodic control problem in volatility alone. The
growth-optimal fee is a pointwise volatility feedback, independent of wealth and
risk aversion; it is a static constant when volatility is constant; and it is
strictly increasing in volatility, so the optimal schedule is pro-cyclical. Under
stochastic volatility the optimal fee and the LP's growth rate are characterised
by a scalar ergodic HJB equation and a linear Poisson equation. The optimal fee is
invariant to price jumps under logarithmic preferences, retains its volatility
dependence under competition, and tolerates gas costs through a dead-band. A calibrated simulation
confirms that the optimal dynamic fee weakly dominates static and heuristic
alternatives on every path, with a modest but uniformly positive gain over the
best static fee that grows with the variability of volatility.

Several questions remain open: the concentrated-liquidity extension with a
range-dependent adverse-selection term, a game between the LP and strategic
arbitrageurs, the adaptive estimation of demand parameters in a live and
adversarial market, and the interaction between dynamic fees and auction-based fee
mechanisms. We hope the reduction and the structural results developed here
provide a useful foundation for these directions.

\section*{Disclosure on the use of AI tools}
\addcontentsline{toc}{section}{Disclosure on the use of AI tools}
AI-based tools were used to assist with literature search, editing, and the
preparation of figures and numerical code. The author verified all mathematical
results, derivations, and references and takes full responsibility for the
content.

\section*{Data and code availability}
The numerical experiments use simulated data generated by the procedures
described in \cref{sec:numerics,sec:empirics}; the calibration parameters are
reported in \cref{sec:empirics}. Code reproducing the figures and tables is
available from the author on request.

\appendix

\section{Wealth dynamics and the adverse-selection rate}
\label{app:dynamics}

\paragraph{Frictionless LVR rate.}
With $P_t=S_t$, the pool value is $V(S)=2\sqrt{kS}$ and the LP holdings are
$y^*(S)=\sqrt{k/S}$ in the risky asset. The instantaneous LVR rate of
\citet{milionis2022lvr} is $\ell(v,S)=-\tfrac12 v S^2 V''(S)$. Since
$V''(S)=-\tfrac12\sqrt{k}\,S^{-3/2}$, we obtain
$\ell(v,S)=\tfrac12 vS^2\cdot\tfrac12\sqrt k S^{-3/2}
=\tfrac14 v\sqrt{kS}=\tfrac{v}{8}V(S)$,
that is, the frictionless LVR rate equals $v/8$ per unit pool value. This is the
$A(0;v)=v/8$ boundary value in \cref{lem:A}.

\paragraph{Effect of the fee.}
With a fee, arbitrage occurs only when the mispricing leaves the band $[-f,f]$.
\citet{milionis2024fees} model block arrivals as Poisson of rate $\lambda$ and
solve for the ergodic law of the mispricing $z$, obtaining the trade probability
\cref{eq:ptrade} and the constant-product adverse-selection rate
$A(f;v)=(v/8)P_{\mathrm{trade}}(f,v)\,e^{f/2}/(1-v/(8\lambda))$ under
$v<8\lambda$. The correction factors $e^{f/2}=1+O(f)$ and
$(1-v/(8\lambda))^{-1}=1+O(\lambda^{-1})$ are negligible for realistic fees and
block rates, and dropping them gives \cref{eq:Aexact}. The fast-block form
\cref{eq:Afast} follows from $P_{\mathrm{trade}}\sim(\sqrt{2\lambda/v}\,f)^{-1}$
when $\eta\gg1$.

\paragraph{Relative wealth.}
The rebalancing benchmark holds the AMM's instantaneous risky position
$y^*(S_t)$ but transacts at $S_t$, so it has the same delta as the AMM and the
same diffusion. The difference $\log W^{\mathrm{reb}}-\log W^{\mathrm{LP}}$ is
predictable and of finite variation, accruing at the adverse-selection rate
$A(f;v)$ net of the uninformed fee income $f\nu(f)$ that the AMM, but not the
benchmark, collects. Hence $\dlog R_t=(f\nu(f)-A(f;v))\dd t$ as in \cref{eq:Rdyn},
with no diffusion term.

\section{Proofs}
\label{app:proofs}

\begin{proof}[Proof of \cref{lem:A}]
Write $\eta=cf$ with $c=\sqrt{2\lambda/v}>0$. Then $A=(v/8)(1+cf)^{-1}$, so
$A(0;v)=v/8>0$, $\partial_f A=-(v/8)c(1+cf)^{-2}<0$, and
$\partial_{ff}A=2(v/8)c^2(1+cf)^{-3}>0$, giving positivity, strict monotonicity,
and strict convexity in $f$. For monotonicity in $v$, fix $f$ and set
$g(v)=(v/8)(1+\sqrt{2\lambda/v}f)^{-1}$. Writing $u=\sqrt{2\lambda/v}f$, one has
$g(v)=(v/8)/(1+u)$ with $u$ decreasing in $v$; both $v/8$ increasing and
$(1+u)^{-1}$ increasing in $v$, so $g'(v)>0$. Finally
$-\partial_f A=(v/8)c(1+cf)^{-2}=(v^{3/2}/(8\sqrt{2}\lambda^{-1/2}))\cdots$; more
directly, $-\partial_f A=\tfrac{\sqrt{2\lambda}\,\sqrt v}{8}(1+cf)^{-2}$ since
$(v/8)c=\sqrt{2\lambda}\sqrt v/8$, and as $v$ increases the prefactor
$\sqrt{2\lambda v}/8$ increases while $cf=\sqrt{2\lambda/v}f$ decreases, so
$(1+cf)^{-2}$ increases; the product is increasing in $v$. The fast-block form is
the $\eta\to\infty$ limit of $(1+\eta)^{-1}\sim\eta^{-1}$.
\end{proof}

\begin{proof}[Proof of \cref{thm:pointwise} and \cref{thm:verif}]
Fix an admissible feedback $f_t=f(v_t)$. By \cref{eq:Rdyn},
$\frac1T\log R_T=\frac1T\int_0^T m(f(v_s);v_s)\dd s\le\frac1T\int_0^T\bar m(v_s)\dd s$,
with $\bar m(v)=\max_{f}m(f;v)=m(f^*(v);v)$ and equality iff $f(v_s)=f^*(v_s)$
for almost every $s$. Under \cref{ass:admissible}, $\bar m$ is bounded and
continuous, and the CIR process is positive-recurrent with stationary Gamma law
$\pi$ (\cref{app:cir}); the ergodic theorem gives
$\frac1T\int_0^T\bar m(v_s)\dd s\to\int\bar m\,\dd\pi$ a.s. and in $L^1$. Adding
$\frac1T\log W^{\mathrm{reb}}_T\to\Lambda^{\mathrm{reb}}$ yields \cref{eq:Lambda}
and the optimality of $f^*$. For \cref{thm:verif}, let $(\Lambda,\psi)$ solve
\cref{eq:hjb} with $\psi\in C^2$ polynomially bounded. Itô's formula applied to
$\psi(v_t)$ and the definition of $\Lop_v$ give, for any admissible $f$,
\[
\dd\bigl[\log R_t+\psi(v_t)\bigr]
= \bigl[m(f_t;v_t)+\Lop_v\psi(v_t)\bigr]\dd t + \xi\sqrt{v_t}\,\psi'(v_t)\dd B_t .
\]
By \cref{eq:hjb}, $\bar m(v)+\Lop_v\psi(v)=\Lambda-\Lambda^{\mathrm{reb}}$ after absorbing the
$-\tfrac12 v$ benchmark drag into $\Lambda^{\mathrm{reb}}$, and
$m(f;v)\le\bar m(v)$, so the drift of $\log R_t+\psi(v_t)$ is at most
$\Lambda-\Lambda^{\mathrm{reb}}$, with equality under $f^*$. Taking expectations,
dividing by $T$, using the polynomial growth of $\psi$ and the moment bounds of
the CIR process to kill $\E[\psi(v_T)]/T\to0$ and the martingale term, gives
$\liminf_T T^{-1}\E[\log R_T]\le\Lambda-\Lambda^{\mathrm{reb}}$, with equality
under $f^*$.
\end{proof}

\begin{proof}[Proof of \cref{prop:invariance}]
With $W^{\mathrm{LP}}_T=W^{\mathrm{reb}}_T R_T$ and
$\log R_T=\int_0^T m(f_s;v_s)\dd s=:I(f)$, a pathwise functional of the volatility
trajectory that carries no stochastic integral, we have for $\gamma\neq1$
\[
\E\bigl[U(W^{\mathrm{LP}}_T)\bigr]
=\frac{1}{1-\gamma}\,\E\Bigl[(W^{\mathrm{reb}}_T)^{1-\gamma}e^{(1-\gamma)I(f)}\Bigr].
\]
Condition on $\mathcal G=\sigma(v_s,\,s\le T)$. The benchmark
$W^{\mathrm{reb}}_T$ and $I(f)$ are $\mathcal G$-measurable up to the price
Brownian motion that is common to all controls; the control enters only through
$I(f)$. For $\gamma<1$ the prefactor $1/(1-\gamma)>0$ and
$x\mapsto e^{(1-\gamma)x}$ is increasing, so the objective is increasing in
$I(f)$. For $\gamma>1$ the prefactor is negative and $x\mapsto e^{(1-\gamma)x}$ is
decreasing, so the objective is again increasing in $I(f)$. In both cases, and for
$\gamma=1$ where the objective is $\E[\log W^{\mathrm{reb}}_T]+\E[I(f)]$, the
optimum maximises $I(f)$ pathwise, achieved by $f_s=f^*(v_s)=\argmax_f m(f;v_s)$,
which depends on neither $w$ nor $\gamma$ nor $t$.
\end{proof}

\begin{proof}[Proof of \cref{prop:foc}]
$m(0;v)=0\cdot\nu_0-A(0;v)=-v/8<0$ by \cref{lem:A}. As $f\to\infty$,
$f\nu(f)=\nu_0 f e^{-\alpha f}\to0$ and $A(f;v)\to0$, so $m(f;v)\to0^-$. Thus if
$\sup_f m>0$ the maximum is attained in the interior. At an interior maximiser,
$\partial_f m=\partial_f[f\nu(f)]-\partial_f A=0$. With \cref{eq:nu},
$\partial_f[f\nu(f)]=\nu_0 e^{-\alpha f}(1-\alpha f)$, and $-\partial_f A>0$ by
\cref{lem:A}, giving \cref{eq:foc}; positivity of the right-hand side forces
$1-\alpha f<0$, i.e.\ $f^*>1/\alpha$. For uniqueness, note $f\mapsto f\nu(f)$ is
log-concave for the specification \cref{eq:nu} and $A(\cdot;v)$ is convex
(\cref{lem:A}); on $(1/\alpha,\infty)$ the map $\partial_f[f\nu(f)]$ is strictly
decreasing and $-\partial_f A$ is strictly increasing in $f$ on the relevant
range, so \cref{eq:foc} has at most one solution there.
\end{proof}

\begin{proof}[Proof of \cref{prop:procyclical}]
Using \cref{eq:Afast}, the interior first-order condition is $G(f,v)=0$ with
$G(f,v)=\Phi(f)-\Psi(f,v)$, $\Phi(f)=\nu_0 e^{-\alpha f}(\alpha f-1)$, and
$\Psi(f,v)=b v^{3/2}f^{-2}$. On $(1/\alpha,2/\alpha)$, $\Phi>0$ and
$\Phi'(f)=\nu_0\alpha e^{-\alpha f}(2-\alpha f)>0$, while $\Psi_f=-2bv^{3/2}f^{-3}<0$
and $\Psi_v=\tfrac32 b v^{1/2}f^{-2}>0$. Hence $G_f=\Phi'-\Psi_f>0$ and
$G_v=-\Psi_v<0$, and the implicit function theorem gives
$\dd f^*/\dd v=-G_v/G_f>0$. For the exact rate \cref{eq:Aexact}, replace $\Psi$ by
$-\partial_f A$, which is positive, decreasing in $f$ on the interior range, and
increasing in $v$ by \cref{lem:A}; the same signs of $G_f$ and $G_v$ obtain, so
$\dd f^*/\dd v>0$.
\end{proof}

\begin{proof}[Proof of \cref{prop:jumps}]
Across a jump of size $y$ in $\log S$, the reference price multiplies by $e^y$.
Both the LP holdings value and the rebalancing benchmark are locally proportional
to $\sqrt S$ near the post-jump price after arbitrage, so each multiplies by
$e^{y/2}$ to leading order. Hence $\Delta\log W^{\mathrm{LP}}=\Delta\log
W^{\mathrm{reb}}=y/2+o(1)$ and $\Delta\log R=o(1)$. The compensated jump
contributes the common deterministic drift $\lambda_J\E_\phi[e^{y/2}-1]$ to both
$\log W^{\mathrm{LP}}$ and $\log W^{\mathrm{reb}}$, cancelling in $\log R$. Since
$m(f;v)$ is unchanged, $f^*(v)$ and the ranking of any two policies are
unaffected.
\end{proof}

\begin{proof}[Competition equilibrium of \cref{sec:competition}]
Differentiating \cref{eq:comp} in $f_i$ and evaluating at the symmetric point
$f_i=\bar f=f$, the logit share equals $1/M$ and its derivative contributes a
term proportional to $\eta_c(1-1/M)$; collecting,
\[
\frac{\partial m_i}{\partial f_i}\Big|_{\mathrm{sym}}
=\frac{\nu_0 D(f)}{M}\Bigl[1-\eta_c f\Bigl(1-\tfrac1M\Bigr)
+ f\,\tfrac{D'(f)}{D(f)}\Bigr] - \partial_f A(f;v).
\]
Setting this to zero defines $f^{\mathrm{eq}}(v)$. The bracket is strictly
decreasing in $f$ when $D$ is log-concave (which restores strict concavity of
$m_i$ in $f_i$ and rules out bang-bang), so an interior root exists whenever the
uninformed marginal revenue at $f=0$ exceeds $\partial_f A(0;v)$ in magnitude. As
$M\to\infty$ the uninformed marginal-revenue term scales as $1/M$ while
$-\partial_f A$ is fixed, so $f^{\mathrm{eq}}(v)$ decreases toward the level at
which marginal uninformed revenue balances marginal adverse selection. The
$v$-dependence enters only through $-\partial_f A$, which by \cref{lem:A} is
increasing in $v$, so $f^{\mathrm{eq}}(\cdot)$ retains the pro-cyclical shape of
\cref{prop:procyclical}.
\end{proof}

\section{The CIR process: ergodicity and stationary law}
\label{app:cir}

The CIR process \cref{eq:cir} with $\kappa,\theta,\xi>0$ and
$2\kappa\theta\ge\xi^2$ is positive, positive-recurrent, and ergodic, with a
unique stationary distribution that is Gamma with shape $a=2\kappa\theta/\xi^2$ and
scale $s=\xi^2/(2\kappa)$, mean $\theta$, and variance $\xi^2\theta/(2\kappa)$
\citep{cir1985}. All polynomial moments are finite and uniformly bounded in time
from a fixed initial condition, which justifies the moment estimates used in
\cref{app:proofs}: the martingale term has zero expectation and $\E[\psi(v_T)]/T
\to0$ for $\psi$ of polynomial growth. The generator $\Lop_v$ in \cref{eq:hjb}
is the CIR generator, and the Poisson equation \cref{eq:poisson} has a solution
$\psi$ unique up to an additive constant within the class of polynomially bounded
functions, by Fredholm alternative for the ergodic generator applied to the
centred source $\tfrac12 v-\bar m(v)+\Lambda$ with $\int(\tfrac12 v-\bar
m)\,\dd\pi=\Lambda^{\mathrm{reb}}-\Lambda+\int\bar m\,\dd\pi$ fixing $\Lambda$ by
the solvability (zero-mean) condition.


\end{document}